\documentclass[sigconf]{acmart}
\AtBeginDocument{%
  }

\copyrightyear{2026}
\acmYear{2026}
\setcopyright{cc}
\setcctype{by}
\acmConference[CIKM '26]
  {Proceedings of the 35th ACM International Conference on Information and Knowledge Management}
  {November 07--11, 2026}
  {Rome, Italy}
\acmBooktitle{Proceedings of the 35th ACM International Conference on Information and Knowledge Management (CIKM '26), November 07--11, 2026, Rome, Italy}
\acmDOI{10.1145/3799682.3840119}
\acmISBN{979-8-4007-2539-5/2026/11}
\usepackage{multirow}
\usepackage{makecell}
\usepackage{enumitem}
\usepackage{subfigure}
\theoremstyle{definition}
\newtheorem{definition}{Definition}[section]
\newtheorem{assumption}{Assumption}[section]
\newtheorem{theorem}{Theorem}[section]
\begin{document}

%%
%% The "title" command has an optional parameter,
%% allowing the author to define a "short title" to be used in page headers.
% \title{A Unified Causal Framework for Multi-valued Treatment: Jointly Optimizing Debiased CTR and Uplift}
\title{Jointly Optimizing Debiased CTR and Uplift for Coupons Marketing: A Unified Causal Framework}

\author{Siyun Yang}
\affiliation{
  \institution{Kuaishou Technology}
  \city{Beijing}
  \country{China}
}
\email{yangsiyun@kuaishou.com}

\author{Shixiao Yang}
\affiliation{
  \institution{Beijing Institute of Technology}
  \city{Beijing}
  \country{China}
}
\email{ysx144\_51@bit.edu.cn}

\author{Jian Wang}
\authornote{Project leader.}
\affiliation{
  \institution{Kuaishou Technology}
  \city{Beijing}
  \country{China}
}
\email{wangjian27@kuaishou.com}

\author{Di Fan}
\affiliation{
  \institution{Kuaishou Technology}
  \city{Beijing}
  \country{China}
}
\email{fandi@kuaishou.com}

\author{Kehe Cai}
\affiliation{
  \institution{Kuaishou Technology}
  \city{Beijing}
  \country{China}
}
\email{caikehe@kuaishou.com}

\author{Haoyan Fu}
\affiliation{
  \institution{Beijing Institute of Technology}
  \city{Beijing}
  \country{China}
}
\email{haoyan-fu@bit.edu.cn}

\author{Jiaming Zhang}
\affiliation{
  \institution{Kuaishou Technology}
  \city{Beijing}
  \country{China}
}
\email{zhangjiaming07@kuaishou.com}

\author{Wenjin Wu}
\authornote{Corresponding author.}
\affiliation{
  \institution{Kuaishou Technology}
  \city{Beijing}
  \country{China}
}
\email{wuwenjin@kuaishou.com}

\author{Peng Jiang}
\affiliation{
  \institution{Independent Researcher}
  \city{Beijing}
  \country{China}
}
\email{13126980773@139.com}

\renewcommand{\shortauthors}{Siyun Yang et al.}

%%
%% The abstract is a short summary of the work to be presented in the
%% article.
\begin{abstract}
% KDD 
In online advertising, marketing interventions such as coupons introduce significant confounding bias into Click-Through Rate (CTR) prediction. Observed clicks reflect a mixture of users' intrinsic preferences and the uplift induced by these interventions. This causes conventional models to miscalibrate base CTRs, which distorts downstream ranking and billing decisions. Furthermore, marketing interventions often operate as multi-valued treatments with varying magnitudes, introducing additional complexity to CTR prediction.

To address these issues, we propose the \textbf{Uni}fied \textbf{M}ulti-\textbf{V}alued \textbf{T}reatment Network (UniMVT). Specifically, UniMVT disentangles confounding factors from treatment-sensitive representations, enabling a full-space counterfactual inference module to jointly reconstruct the debiased base CTR and intensity-response curves. To handle the complexity of multi-valued treatments, UniMVT employs an auxiliary intensity estimation task to capture treatment propensities and devise a unit uplift objective that normalizes the intervention effect. This ensures comparable estimation across the continuous coupon-value spectrum. UniMVT simultaneously achieves debiased CTR prediction for accurate system calibration and precise uplift estimation for incentive allocation. Extensive experiments on synthetic and industrial datasets demonstrate UniMVT's superiority in both predictive accuracy and calibration. Furthermore, real-world A/B tests confirm that UniMVT significantly improves business metrics through more effective coupon distribution.

\end{abstract}

%%
%% The code below is generated by the tool at http://dl.acm.org/ccs.cfm.
%% Please copy and paste the code instead of the example below.
%%
\begin{CCSXML}
<ccs2012>
   <concept>
       <concept_id>10002951.10003260.10003272</concept_id>
       <concept_desc>Information systems~Online advertising</concept_desc>
       <concept_significance>500</concept_significance>
       </concept>
 </ccs2012>
\end{CCSXML}

\ccsdesc[500]{Information systems~Online advertising}

% \ccsdesc[500]{Do Not Use This Code~Generate the Correct Terms for Your Paper}
% \ccsdesc[300]{Do Not Use This Code~Generate the Correct Terms for Your Paper}
% \ccsdesc{Do Not Use This Code~Generate the Correct Terms for Your Paper}
% \ccsdesc[100]{Do Not Use This Code~Generate the Correct Terms for Your Paper}

%%
%% Keywords. The author(s) should pick words that accurately describe
%% the work being presented. Separate the keywords with commas.
\keywords{Multi-valued Treatment, Debiased CTR Prediction, Causal Inference}
%% A "teaser" image appears between the author and affiliation
%% information and the body of the document, and typically spans the
%% page.
% \begin{teaserfigure}
%   \includegraphics[width=\textwidth]{sampleteaser}
%   \caption{Seattle Mariners at Spring Training, 2010.}
%   \Description{Enjoying the baseball game from the third-base
%   seats. Ichiro Suzuki preparing to bat.}
%   \label{fig:teaser}
% \end{teaserfigure}

% \received{20 February 2007}
% \received[revised]{12 March 2009}
% \received[accepted]{5 June 2009}

%%
%% This command processes the author and affiliation and title
%% information and builds the first part of the formatted document.
\maketitle
% \fancyhead{} % 清空所有页眉设置

\section{Introduction}

Click-Through Rate (CTR) prediction~\cite{ma2018entire, yang2021multi, mao2023finalmlp, guo2017deepfm, wang2022enhancing, chang2023twin, fu2025unified}  remains a cornerstone of modern recommendation and advertising systems. Its evolution has been significantly driven by Multi-Task Learning (MTL)~\cite{caruana1997multitask, li2023removing, wang2023single, zhao2019recommending, bi2022mtrec, li2023adatt, su2024stem, jiang2024automatic} architectures, such as MMoE~\cite{ma2018modeling} and PLE~\cite{tang2020progressive}, which exploit correlations across diverse user behaviors to improve overall performance. However, modern industrial systems increasingly deploy proactive marketing interventions, such as discount coupons and personalized incentives, to further stimulate user engagement. As illustrated in Figure~\ref{fig:coupon_to_ctr}, the presence of coupons significantly shifts the observed click probability for identical user-item pairs compared to the untreated scenario. In these cases, the observed click signal is a mixture of users’ intrinsic preferences and the uplift induced by these interventions. This entanglement is further complicated by the inherent heterogeneity in user responses: high-propensity users may convert based on interest alone, making incentives redundant, while price-sensitive users often require substantial discounts. Although stronger incentives generally boost conversion, they also incur higher costs. Accurately capturing these heterogeneous treatment effects is therefore essential for moving beyond traditional CTR prediction toward a more nuanced, incentive-aware recommendation paradigm.
% Click-through rate (CTR) prediction~\cite{ma2018entire, yang2021multi, mao2023finalmlp, guo2017deepfm, wang2022enhancing, chang2023twin, fu2025unified} has evolved significantly through multi-task learning architectures~\cite{caruana1997multitask, li2023removing, wang2023single, zhao2019recommending, bi2022mtrec, li2023adatt, su2024stem, jiang2024automatic}, such as MMoE~\cite{ma2018modeling} and PLE~\cite{tang2020progressive}, which exploit correlations across diverse user behaviors. However, modern industrial systems increasingly deploy proactive marketing interventions, such as discount coupons and personalized incentives, to further stimulate user engagement. As illustrated in Figure~\ref{fig:coupon_to_ctr}, the presence of coupons significantly shifts the observed click probability for identical user-item pairs compared to the untreated scenario. In these cases, the observed click signal is no longer a pure reflection of a user's intrinsic preference, user responses to such interventions are inherently heterogeneous. High-propensity users may engage with a specific advertisement based on baseline interest alone, regardless of the incentive level. Conversely, price-sensitive individuals typically require substantial discounts to be motivated. While increasing discount intensities generally correlates with higher conversion probabilities, it simultaneously incurs escalating operational costs, necessitating a strategic balance between engagement uplift and budget expenditure.  

\begin{figure*}[ht]
  \centering
  \subfigure[The industrial scenario: Comparison of CTR with and without coupon interventions. \label{fig:coupon_to_ctr}]{
    \includegraphics[width=0.24\linewidth]{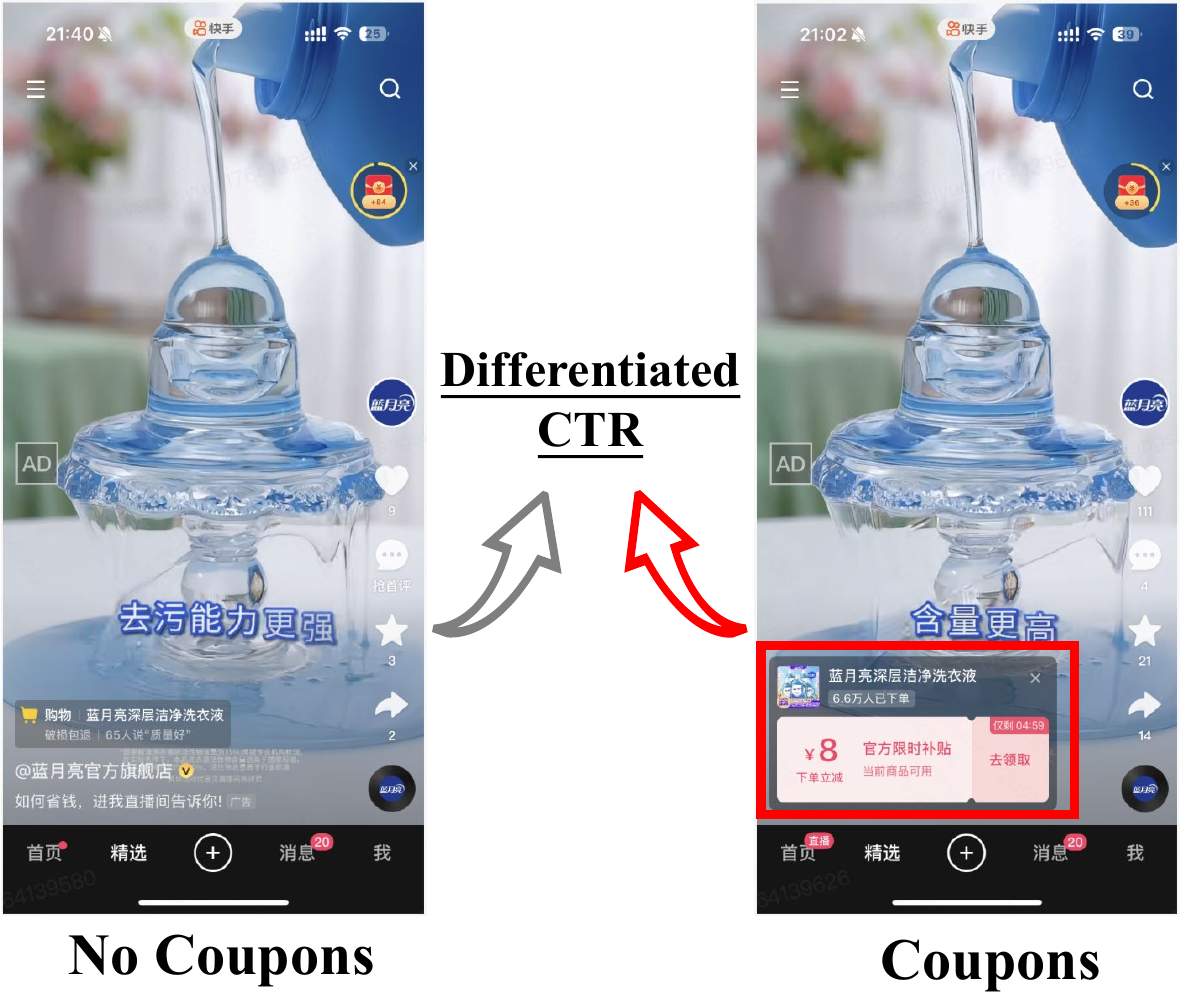}}
    \hspace{0.01\linewidth}
  % \subfigure[Heterogeneous user sensitivities: Heterogeneous user reactions to various discount intensities and associated cost tradeoffs.\label{fig:heterogeneous_example}]{
  %   \includegraphics[width=0.37\linewidth]{pics/hamburger.pdf}}
  %   \hspace{0.01\linewidth}
  % \subfigure[The empirical analysis: Baseline/ treatment-agnostic CTR models PCOC and sample density distribution across treatment intensities on Randomized Controlled Trial (RCT) samples.\label{fig:pcoc_dose_plot}]{
  %   \centering
  %   \includegraphics[width=0.66\linewidth]{pics/new_flow.pdf}}
  % \caption{Motivation for multi-value causal marketing interventions. }\label{fig:motivation_combined}
  \subfigure[Architectural limitations of prevailing uplift frameworks. (i) Baseline1: Two-Stage separate model. (ii) Baseline2: Unified MTL with Shared Bottom model.\label{fig:baselines}]{
    \centering
    \includegraphics[width=0.66\linewidth]{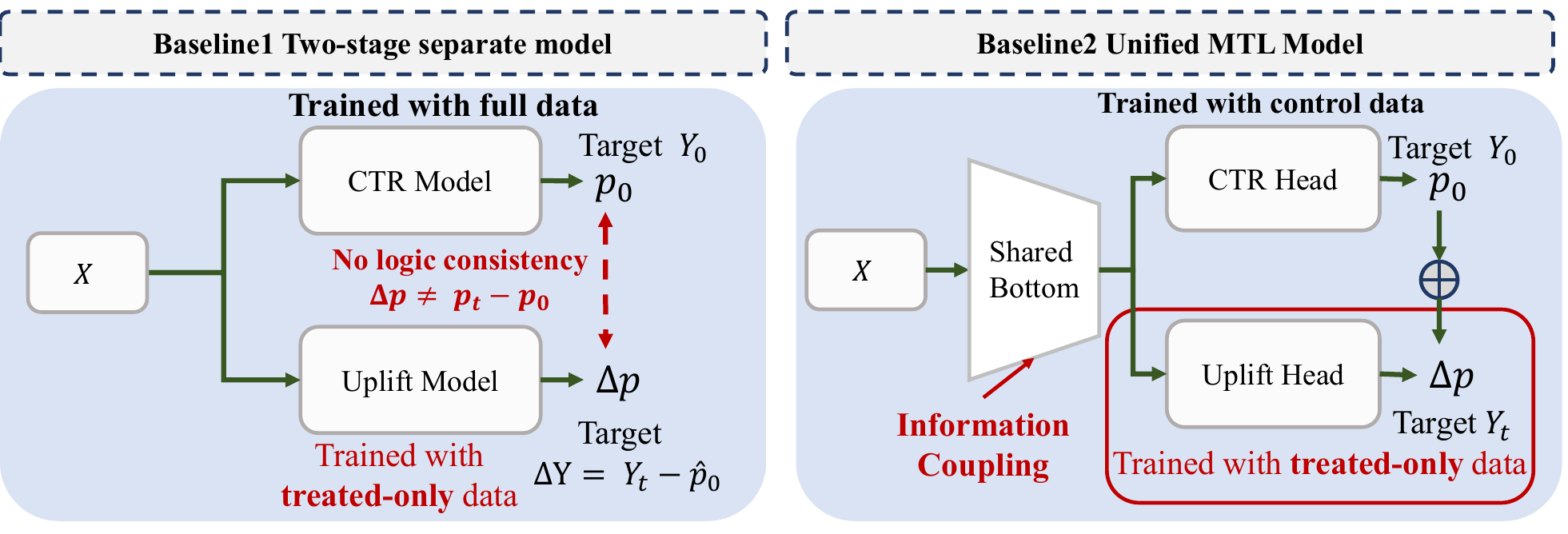}}
    \vspace{-2em}
  \caption{Motivation for multi-value causal marketing interventions. }\label{fig:motivation_combined}
  \Description{Comparison of coupon and no-coupon scenarios and of two baseline uplift-model architectures.}
  \vspace{-1em}
\end{figure*}

% Uplift modeling provides the theoretical foundation for estimating heterogeneous treatment effects (HTE). Foundational meta-learning frameworks, such as S-Learner and T-Learner~\cite{kunzel2019metalearners}, utilize standard supervised models to isolate variations across treatment groups. To mitigate selection bias in observational data, CFRNet~\cite{shalit2017estimating} minimizes distribution discrepancies in a latent space, while FlexTENet~\cite{curth2021inductive} shares information across potential outcomes via inductive biases. Recently, DESCN~\cite{zhong2022descn} employs cross-networks to alleviate data sparsity across the entire sample space. 
% Moreover, in industrial systems, uplift modeling has become vital for optimizing complex decision-making. To handle multi-valued and continuous treatments, recent applications explicitly integrate intervention attributes—such as EFIN~\cite{liu2023explicitfeatureinteractionawareuplift} for fine-grained coupon distribution, or continuous causal forests for dynamic pricing~\cite{wan2022gcfgeneralizedcausalforest}. Furthermore, state-of-the-art methods have evolved beyond binary outcomes to accommodate complex, real-world objectives. Recent advancements tackle long-tail continuous revenue uplift via Zero-Inflated Lognormal (ZILN)~\cite{he2024rankabilityenhancedrevenueupliftmodeling} distributions and rank-aware losses, while dynamic frameworks like CDUM~\cite{meng2024coarsetofinedynamicupliftmodeling} capture real-time contexts for personalized interventions in streaming feeds.

Uplift modeling provides the theoretical foundation for estimating heterogeneous treatment effects. Specifically, S-Learner and T-Learner~\cite{kunzel2019metalearners} serve as foundational meta-learning frameworks that utilize standard supervised models to isolate variations across different treatment groups. To mitigate selection bias in observational data, CFRNet~\cite{shalit2017estimating} introduces representation balancing to minimize the distribution discrepancy between treated and control populations in a latent space, while FlexTENet~\cite{curth2021inductive} adaptively shares information across potential outcomes to exploit structural similarities through flexible inductive biases. To handle continuous or multi-valued treatment intensities, DRNet~\cite{schwab2020learning} partitions the intensity space into discrete intervals for localized estimation, whereas VCNet~\cite{nie2021vcnet} leverages varying coefficient networks to ensure the functional continuity of estimated intensity-response curves. Additionally, CEVAE approximates latent factors that confound treatment assignments through variational autoencoders, and more recently, DESCN~\cite{zhong2022descn} employs cross-networks to perform estimation across the entire sample space to alleviate data sparsity.

% 接下来可以接上您的转折（The Gap）：
% Despite these advancements, directly applying existing frameworks to continuous industrial interventions reveals two critical gaps:
% \textbf{1) Strategy Side:} The Revenue-Cost Trade-off Dilemma. Optimizing coupon delivery requires balancing incremental revenue against continuous intervention costs. However, existing methods typically resort to piecewise modeling (e.g., discretization) for varying intensities, lacking a consistent, global structural assumption for continuous treatments. Without a unified formulation that inherently respects diminishing marginal returns, downstream optimization fails to evaluate intermediate pricing tiers effectively, frequently collapsing into suboptimal allocations.
% \textbf{2) Modeling Side:} Confounding Bias and Calibration Drift. Due to selection bias in historical logs, treatment assignments are heavily confounded by user features. As shown in Figure \ref{fig:baselines}, Conventional approaches often deploy uplift modeling as an isolated downstream module or train exclusively on treated samples. Failing to disentangle intrinsic user preferences from treatment-sensitive responses across the entire sample space induces confounding bias and calibration drift, yielding unreliable base CTR and uplift predictions.
Despite these advancements, applying existing frameworks to continuous industrial interventions reveals two critical limitations:
\textbf{1) Architectural isolation and data fragmentation:} As shown in Figure \ref{fig:baselines}(i), conventional approaches often deploy uplift modeling as an isolated downstream module trained exclusively on treated samples. Consequently, the upstream CTR model remains blind to interventions, yielding biased predictions that undermine downstream calibration. Furthermore, restricting uplift training to this narrow sub-population exacerbates data sparsity and hinders the learning of robust user representations.
\textbf{2) Confounding bias and calibration drift:} As illustrated in Figure \ref{fig:baselines}(ii), even when CTR and uplift predictions are integrated into a unified model, historical treatment assignments remain heavily confounded by user features due to selection bias. Failing to explicitly disentangle intrinsic user preferences from treatment-sensitive responses across the entire sample space induces calibration drift, yielding unreliable baseline CTR and uplift predictions.

To overcome these dual bottlenecks, it is imperative to integrate CTR debiasing and continuous uplift estimation into a unified, entire-space framework.
To address these challenges, we propose a unified causal modeling paradigm named UniMVT (\textbf{Uni}fied \textbf{M}ulti-\textbf{V}alued \textbf{T}reatment Network). At the architectural level, UniMVT employs a Deconfounded Causal Representation (DCR) layer to disentangle invariant confounding variables from treatment-sensitive features. Methodologically, we decouple the estimation of base CTR and incremental uplift by introducing a Counterfactual X-Network. Crucially, we propose a Latent Unit Sensitivity (uCATE) formulation. By enforcing a latent monotonic linear treatment shift strictly within the logit space, UniMVT intrinsically captures the non-linear saturation effect of continuous intervention, providing robust and granular criteria for optimal ROI-constrained coupon allocation.

Our main contributions are summarized as follows:
\begin{itemize}[leftmargin=8pt]
\item \textbf{Unified Inference Paradigm}: We develop a treatment-aware CTR inference framework incorporating monotonic constraints to capture the heterogeneity of user sensitivities across varying treatment intensities, enabling simultaneous global CTR debiasing and precise uplift estimation. 
\item \textbf{Causal Framework}: We develop a causal full-space framework featuring a Deconfounded Causal Representation (DCR) layer to isolate confounding factors, and a Counterfactual X-Network that mutually regularizes factual and counterfactual predictions to bridge the observational gap.Theoretical analysis demonstrates the framework's convergence.
\item \textbf{Industrial Deployment}: Extensive experiments on both robust synthetic benchmarks and a massive real-world dataset demonstrate UniMVT's state-of-the-art performance. 
\end{itemize}

\section{Related Works}

Uplift models for CATE estimation have evolved from meta learning to deep architectures. Early representation-based methods, such as CFRNet~\cite{shalit2017estimating} and FlexTENet~\cite{curth2021inductive}, focused on minimizing distribution discrepancies, while X-learner~\cite{kunzel2019metalearners} addressed treatment imbalance. To mitigate biases in large-scale observational data, approaches like DESCN~\cite{zhong2022descn} and EUEN~\cite{ke2021addressing} employ entire-space modeling to jointly learn propensity and response functions. For continuous treatments, research has progressed from partition-based methods like DRNet~\cite{schwab2020learning} to functional continuity models like VCNet~\cite{nie2021vcnet}. Furthermore, recent studies have extended these frameworks to handle complex industrial constraints, including temporal dynamics~\cite{zhang2024temporal}, revenue rankability~\cite{he2024rankability}, and robust estimation under covariate shifts~\cite{wang2025inter,tao2023event,ai2024improve}.
In industrial systems, recent applications integrate intervention attributes, as seen in EFIN~\cite{liu2023explicitfeatureinteractionawareuplift} for coupon distribution and continuous causal forests for dynamic pricing~\cite{wan2022gcfgeneralizedcausalforest}. Furthermore, methods have evolved beyond binary outcomes to handle complex objectives: addressing long-tail continuous revenue uplift via ZILN distributions and rank-aware losses~\cite{he2024rankabilityenhancedrevenueupliftmodeling}, and capturing real-time contexts for personalized streaming interventions via frameworks like CDUM~\cite{meng2024coarsetofinedynamicupliftmodeling}.

\section{Preliminaries}

\subsection{Notations and Definitions}

We formally define the observed dataset as $\mathcal{D} = \{(\mathbf{x}_i, w_i, d_i, y_i)\}_{i=1}^N$, where:

\begin{itemize}[leftmargin=12pt]
\item $\mathbf{x}_i \in \mathcal{X} \subseteq \mathbb{R}^k$ denotes the context feature vector, encompassing user attributes and historical behaviors.
\item $w_i \in \{0, 1\}$ is the binary treatment indicator, where $w_i=1$ signifies exposure to the intervention and $w_i=0$ indicates the control status.
\item $t_i \in \mathcal{T} \subset \mathbb{R}^+$ represents the treatment intensity (e.g., coupon face value or discount magnitude). Note that $t_i$ is observable and non-zero if and only if $w_i=1$; otherwise, we define $t_i=0$.
\item $y_i \in \{0, 1\}$ is the binary click label.
\end{itemize}

\subsection{Causal Assumptions}

To accurately quantify the impact of interventions on Click-Through Rate (CTR) variations, we frame the problem within the causal inference paradigm. We establish the identifiability of the Conditional Average Treatment Effect (CATE) through standard assumptions and introduce a structural constraint to model the intensity-response relationship effectively.

\begin{assumption}[Consistency, unconfoundedness and overlap]
Following the standard potential outcomes framework~\cite{Rubin01032005}, we adopt the following assumptions to ensure the identifiability of the treatment effect:

\begin{itemize}[leftmargin=8pt]
\item \textbf{Consistency}: The observed outcome $y$ for a unit receiving treatment $t$ is identical to the potential outcome $Y(t)$; formally, $Y = Y(t)$ if $T=t$.
\item \textbf{Unconfoundedness}: Given the covariate vector $\mathbf{x}$, the treatment assignment $T$ is independent of the potential outcomes; formally, $\{Y(t)\}_{t \in \mathcal{T}} \perp\!\!\!\perp T \mid X$.
\item \textbf{Overlap}: Every unit has a non-zero probability of receiving any treatment level within the domain; formally, $0 < P(T=t \mid X=\mathbf{x}) < 1$ for all $t \in \mathcal{T}$ and $\mathbf{x} \in \mathcal{X}$.
\end{itemize}
\end{assumption}

% \begin{assumption}[Monotonic Linear Relationship]\label{assump:monotonicity}
% Motivated by empirical statistics from randomized experiments\ref{fig:motivation_combined}, we posit that the incremental uplift exhibits a monotonic linear relationship with the treatment intensity $t$. Formally, for a user with features $\mathbf{x}$, the CATE is proportional to the intensity:
% \begin{equation}
% \tau(\mathbf{x}, t) \approx \alpha(\mathbf{x}) \cdot t, \quad \text{where } \alpha(\mathbf{x}) > 0.
% \end{equation}
% This assumption implies that the marginal gain is constant per unit of investment. This formulation not only aligns with the observed intensity-response patterns in industrial settings but also significantly simplifies the optimization landscape, allowing for the direct calculation of ROI and the efficient derivation of optimal coupon face values.
% \end{assumption}

\begin{assumption}[Latent Linear Monotonicity]\label{assump:latent_linearity}
Motivated by empirical statistics from randomized experiments (Figure \ref{fig:motivation_combined}) and the diminishing marginal utility commonly observed in economics, we avoid imposing a strict linear assumption on the absolute probability scale, which is inherently bounded by $[0,1]$. Instead, we postulate a structural assumption in the latent space: the treatment intensity $t$ exerts a monotonic linear shift on the user's conversion intention within the log-odds (logit) space. Formally, for a user with features $\mathbf{x}$, there exists a user-specific positive scalar $\eta(\mathbf{x}) > 0$ such that the post-treatment logit is shifted proportionally by $t$:
\begin{equation}
\text{logit}(p_t(\mathbf{x})) - \text{logit}(p_0(\mathbf{x})) = \eta(\mathbf{x}) \cdot t, \quad \text{where } \eta(\mathbf{x}) > 0.
\end{equation}
% Consequently, the Conditional Average Treatment Effect (CATE) in the probability space inherently models the non-linear saturation effect:
% \begin{equation}
% \tau(\mathbf{x}, t) = \sigma(\text{logit}(p_0(\mathbf{x})) + \eta(\mathbf{x}) \cdot t) - p_0(\mathbf{x}).
% \end{equation}
% This assumption captures the realistic diminishing marginal returns of coupon interventions while maintaining a tractable and optimizable parameter $\eta(\mathbf{x})$ (unit sensitivity) to guide cost-constrained treatment allocation.
\end{assumption}

% To accurately assess the impact of interventions on CTR differences, we model the problem as a causal inference task and make the following assumptions:

% \begin{assumption}[Consistency, unconfoundedness and overlap]

% We adopt the following standard assumptions from causal inference theory:

% \begin{itemize}[leftmargin=8pt]
% \item \textbf{Consistency}: If individual $i$ is assigned treatment $w_i$, the observed outcome $Y_i$ and the potential outcomes $Y_i(w_i)$ are consistent.
% \item \textbf{Unconfoundedness}: there are no unobserved confounders, so that $Y \perp T |X$.
% \item \textbf{Overlap}: For each observation $i$, only one of the latent outcomes can be observed, i.e. $0 < \pi(x) < 1, \forall x \in X$ .
% \end{itemize}
  
% \end{assumption}

% \begin{assumption}[Monotonic Treatment Effect]
% Given the existence of intervention level $d_i$, it is assumed that the individual causal effect of the intervention exhibits a monotonically increasing relationship with the intervention level $d_i$. Formally, for any instance $i$, if $d_a < d_b$, it follows that 
% \begin{equation}
%   \text{CATE}(x_i;d_a) < \text{CATE}(x_i;d_b).
% \end{equation}

% This implies that as the intervention level increases, the causal effect of the intervention also intensifies.
% \end{assumption}

\subsection{Problem Formulation}

Our primary objective is to disentangle the intrinsic user preference (Base CTR) from the intervention effect within the observed data. Building upon this, we aim to accurately estimate the Conditional Average Treatment Effect (CATE) across continuous treatment intensities. We frame this problem under the Neyman-Rubin Potential Outcome framework~\cite{Rubin01032005}.

% \begin{definition}[\textbf{Base \& Treated CTR}]
% For a user instance with feature vector $\mathbf{x}$, let $Y(0)$ and $Y(t)$ denote the potential outcomes under the control state ($w=0$) and intervention intensity $t$ ($w=1$), respectively. We define the \textbf{Base CTR}, denoted as $p_0(\mathbf{x})$, and the \textbf{Treated CTR}, denoted as $p_t(\mathbf{x}, t)$, as the expected potential outcomes conditional on the covariates:
% \begin{align}
% p_0(\mathbf{x}) &= \mathbb{E}[Y(0) \mid X=\mathbf{x}], \\
% p_t(\mathbf{x}, t) &= \mathbb{E}[Y(t) \mid X=\mathbf{x}, T=t].
% \end{align}
% \end{definition}

\begin{definition}[\textbf{Base \& Treated CTR}]\label{def:ctr}
For a user instance with feature vector $\mathbf{x}$, let $Y(t)$ denote the potential outcome when the user is exposed to an intervention intensity $t \in \mathcal{T}$, where $t=0$ specifically denotes the control state (no intervention). We define the \textbf{Base CTR}, denoted as $p_0(\mathbf{x})$, and the \textbf{Treated CTR}, denoted as $p_t(\mathbf{x}, t)$, as the expected potential outcomes conditional strictly on the user covariates:
\begin{align}
p_0(\mathbf{x}) &\triangleq \mathbb{E}[Y(0) \mid \mathbf{x}], \\
p_t(\mathbf{x}, t) &\triangleq \mathbb{E}[Y(t) \mid \mathbf{x}].
\end{align}
\end{definition}

% \begin{definition}[\textbf{Unit CATE}]\label{Unit CATE}
% The individual treatment effect is defined as the difference between the potential outcomes $Y(t)$ and $Y(0)$. Drawing on Assumption \ref{assump:monotonicity}, we decompose the Conditional Average Treatment Effect (CATE) for a specific intensity $t$, denoted as $\tau(\mathbf{x}, t)$, into the product of the dosage magnitude and a Unit CATE (uCATE), denoted as $\eta(\mathbf{x})$:

% \begin{equation}
% \tau(\mathbf{x}, t) = \text{logit}(p_t(\mathbf{x}, t)) - \text{logit}(p_0(\mathbf{x})) = t \cdot \eta(\mathbf{x}). \label{eq:cate_def}
% \end{equation}

% Here, $\eta(\mathbf{x})$ represents the intrinsic marginal sensitivity of user $\mathbf{x}$ to the intervention (i.e., the uplift generated per unit of dosage), while $t$ represents the intervention intensity.
% \end{definition}

% \begin{definition}[\textbf{Unit CATE}]\label{Unit CATE}
\begin{definition}[\textbf{Latent Unit Sensitivity (uCATE)}]\label{def:latent_ucate}
Building upon Assumption \ref{assump:latent_linearity}, we formally define the scalar parameter $\eta(\mathbf{x})$ as the Latent Unit CATE (uCATE). It represents the intrinsic marginal sensitivity of user $\mathbf{x}$ to the intervention in the log-odds space. Consequently, the standard Conditional Average Treatment Effect (CATE) in the observable probability space, defined as the absolute difference between potential outcomes, can be systematically parameterized by this intensity-invariant uCATE:
\begin{align}
\tau(\mathbf{x}, t) &\triangleq p_t(\mathbf{x}, t) - p_0(\mathbf{x}) \nonumber\\ &= \sigma\Big(\text{logit}(p_0(\mathbf{x})) + \eta(\mathbf{x}) \cdot t\Big) - p_0(\mathbf{x}). \label{eq:cate_derivation} 
\end{align}
Through this definition, the standard CATE $\tau(\mathbf{x}, t)$ inherently captures the non-linear saturation effect as the intensity $t$ increases, while maintaining a tractable and optimizable scalar $\eta(\mathbf{x})$ for downstream cost-constrained allocation.\end{definition}

% From Eq. \eqref{eq:cate_def}, the observed outcome for a treated sample can be decomposed into the base probability and the incremental lift:

% \begin{equation}
% p_d(\mathbf{x}, t) = p_0(\mathbf{x}) + \tau(\mathbf{x}, t).
% \end{equation}

% To ensure identifying validity, we construct our model based on two key premises regarding the relationship between $p_0$ and $p_t$:

% \begin{itemize}[leftmargin=8pt]
% \item \textbf{Structural Correlation}: The distributions of the Base CTR $p_0(\mathbf{x})$ and the Treated CTR $p_t(\mathbf{x}, d)$ are inherently correlated, as they share the same underlying user representations. In the hypothetical absence of a treatment effect (i.e., $\tau(\mathbf{x}, t) \to 0$), the two distributions should asymptotically align.
% \item \textbf{Effect Heterogeneity}: The impact of interventions is non-uniform across the population. We assume the existence of heterogeneous treatment effects, where the incremental uplift $\tau(\mathbf{x}, t)$ varies systematically based on user context $\mathbf{x}$ and intensity value $t$.
% \end{itemize}

% \subsection{How does CTR bias manifest due to its lack of awareness regarding coupon issuance?}

% (To be supplemented)

\section{Framework}

% In this section, we formally present the proposed Unified Multi-Valued Treatment Network (UniMVT), detailing its architectural components and the associated learning paradigm.

\subsection{Overview}
As illustrated in Figure \ref{fig:main}, our framework comprises two modules. The first is a Deconfounded Causal Representation (DCR) Layer, which employs a Mixture-of-Experts (MoE) structure to encode input covariates and explicitly disentangle them into treatment-sensitive representations and treatment-invariant confounding representations. The second is a Heterogeneous Treatment Effect (HTE) Network, which leverages these disentangled embeddings to jointly estimate the baseline CTR $p_0(\mathbf{x})$ and the latent unit sensitivity $\eta(\mathbf{x})$. Guided by Definition \ref{def:latent_ucate}, the intensity-dependent uplift $\tau(\mathbf{x}, t)$ is subsequently derived analytically for any specific intervention intensity $t$. 
% Training utilizes all observational data.
The training strategy uses all observational data, including both coupon and no-coupon populations, to facilitate learning across the full covariate-treatment space.
% As illustrated in Figure \ref{fig:main}, our framework comprises two modules. The Deconfounded Causal Representation (DCR) Layer uses MoE to disentangle covariates into treatment-sensitive and confounding representations. Using these, the Heterogeneous Treatment Effect (HTE) Network jointly estimates baseline CTR $p_0(\mathbf{x})$ and latent sensitivity $\eta(\mathbf{x})$, analytically deriving intensity-dependent uplift $\tau(\mathbf{x}, t)$ (Definition \ref{def:latent_ucate}). Training utilizes all observational data.

\begin{figure*}[ht]
  \centering
  \includegraphics[width=\linewidth]{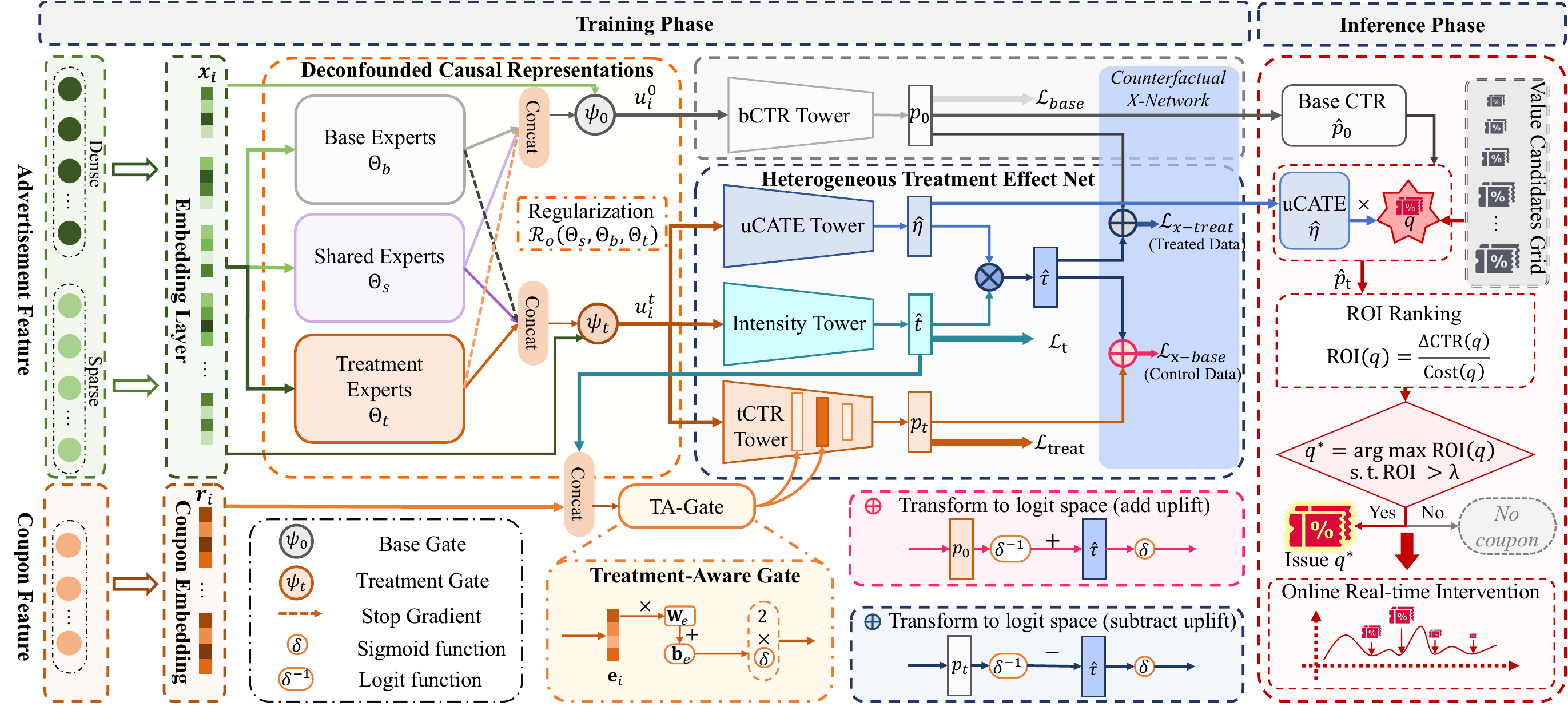}
  \caption{Overview of UniMVT, consisting of: (i) a DCR layer that disentangles treatment-sensitive and confounding features via MoE; and (ii) an HTE network for jointly estimating base CTR, latent uCATE, and intensity-dependent uplift.}
  \label{fig:main}
  \Description{Architecture of UniMVT, showing the DCR layer, HTE network, training losses, counterfactual paths, and online inference flow.}
  \vspace{-1em}
\end{figure*}

\subsection{Deconfounded Causal Representation}
The Deconfounded Causal Representation (DCR) module employs a Multi-gate Mixture-of-Experts (MMoE) architecture~\cite{MMOE2018} to encode input features $\mathbf{x}_i$ into disentangled latent embeddings. We instantiate three distinct groups of MLP-based experts: \textit{Base} ($\mathbf{H}_i^b$), \textit{Shared} ($\mathbf{H}_i^s$), and \textit{Treatment} ($\mathbf{H}_i^t$) experts, which capture global baseline preferences, treatment-invariant confounders, and intervention-sensitive features, respectively.

To synthesize these representations, task-specific gating networks ($\mathbf{g}_{i}^0$ and $\mathbf{g}_{i}^t$) generate soft attention weights. We construct the final disentangled embeddings, $\mathbf{u}_{i}^0$ and $\mathbf{u}_{i}^t$, using a Stop-Gradient ($\mathsf{SG}$) operator to prevent gradient leakage between conflicting objectives while allowing information flow through shared confounders:
\begin{align}
\mathbf{u}_{i}^0 &= (\mathbf{g}_{i}^0)^\top \mathsf{CONCAT}\Big(\mathbf{H}_{i}^b, \mathbf{H}_{i}^s, \mathsf{SG}(\mathbf{H}_i^t)\Big), \\ 
\mathbf{u}_{i}^t &= (\mathbf{g}_{i}^t)^\top \mathsf{CONCAT}\Big(\mathsf{SG}(\mathbf{H}_{i}^b), \mathbf{H}_{i}^s, \mathbf{H}_i^t\Big). 
\end{align} 
Consistent with Definition \ref{def:latent_ucate}, $\mathbf{u}_{i}^0$ is fed into the Base Tower to estimate the counterfactual baseline $p_0(\mathbf{x}_i)$, while $\mathbf{u}_{i}^t$ enters the Heterogeneous Treatment Effect Net (HTENet) to directly estimate the Latent Unit CATE $\eta(\mathbf{x}_i)$.

Furthermore, to explicitly enforce the disentanglement of confounding factors from treatment-specific heterogeneity, we impose an orthogonality penalty~\cite{curth2021inductive} across all distinctive expert pairs:
\begin{equation}
\mathcal{R}_{orth} = \sum_{l=1}^L \sum_{(u, v) \in \mathcal{P}} \left\| (\mathbf{\Theta}^l_u)^\top \mathbf{\Theta}^l_v \right\|_F^2,
\end{equation}
where $\mathbf{\Theta}^l_{(\cdot)}$ denotes the $l$-th layer weight matrix of the respective expert group, $\mathcal{P} = \{(b, s), (t, s), (b, t)\}$, and $\|\cdot\|_F^2$ is the squared Frobenius norm. This constraint minimizes the correlation between the parameter spaces of different expert groups, encouraging the extraction of strictly non-overlapping causal representations.

\subsection{Heterogeneous Treatment Effect Net}

To prevent task interference and negative transfer between factual and counterfactual estimations, we employ a decoupled tower architecture. We formally define the Heterogeneous Treatment Effect (HTE) Network, consisting of two distinct prediction pathways and a specific mechanism for injecting treatment information.

\subsubsection{Decoupled Prediction Towers}
To isolate the baseline propensity from intervention-specific dynamics, we instantiate two independent MLP towers:

\begin{itemize}[leftmargin=8pt]
\item \textbf{Base Tower}: Ingests the baseline-specific representation $\mathbf{u}_i^0$ from the DCR layer to estimate the intrinsic click probability $\hat{p}_0(\mathbf{x}_i)$. 
\item \textbf{Treatment Tower}: Consumes the treatment-aware representation $\mathbf{u}_i^t$ and the treatment attribute embedding $\mathbf{e}^t_i$ to predict the intervention-conditioned outcome $\hat{p}_t(\mathbf{x}_i, t_i)$. 
\end{itemize}

To mitigate the signal dilution inherent in simple feature concatenation, we introduce a Treatment-Aware Gate (TA-Gate), adapting the LHUC framework~\cite{ppnet2023} for continuous interventions. For each hidden layer $l$, we compute a context-dependent scaling vector $\mathbf{a}_i^{(l)} \in (0, 2)$ derived from the treatment embedding $\mathbf{e}^t_i$:

\begin{equation}
\mathbf{a}_i^{(l)} = 2 \cdot \sigma(\mathbf{W}_g^{(l)} \mathbf{e}^t_i + \mathbf{b}_g^{(l)}), \quad \tilde{\mathbf{h}}_i^{(l)} = \mathbf{a}_i^{(l)} \odot \mathbf{h}_i^{(l)}.
\end{equation}

Here, $\sigma(\cdot)$ is the sigmoid function and $\odot$ denotes the element-wise product. This mechanism ensures that the treatment intensity dynamically modulates the feature processing pathway at multiple abstraction levels.

We formulate the optimization objective using the binary cross-entropy loss function, denoted as $\ell_{\text{BCE}}(\cdot, \cdot)$. Let $\mathcal{C} = \{i \mid t_i=0\}$ and $\mathcal{T} = \{i \mid t_i > 0\}$ denote the control and treatment groups, respectively. The factual loss functions for the two towers are defined as:

\begin{align}
\mathcal{L}_{base} &= \sum_{i \in \mathcal{C}} \ell_{\text{BCE}}(y_i, \hat{p}_0(\mathbf{x}_i)), \\ 
\mathcal{L}_{treat} &= \sum_{i \in \mathcal{T}} \ell_{\text{BCE}}(y_i, \hat{p}_t(\mathbf{x}_i, t_i)).
\end{align}

\subsubsection{Counterfactual X-Network}
 
Inspired by the X-learner~\cite{kunzel2019metalearners}, we instantiate an X-Network as a bridging module to enable counterfactual reasoning. It shares the bottom representation $\mathbf{u}_{i}^t$ to infer hypothetical outcomes under alternative treatment regimes.

To operationalize this for the control group ($\mathcal{C}$), where actual intervention intensity is unobserved, we introduce an auxiliary Intensity Prediction Head to impute a feasible hypothetical dosage $\hat{t}_i$. We strictly enforce domain constraints by projecting the latent representation into a physical range $[t_{\min}, t_{\max}]$:

\begin{align}
\hat{t}_i &= \sigma(\mathsf{MLP}(\mathsf{SG}(\mathbf{u}^t_i))) \cdot (t_{\max} - t_{\min}) + t_{\min}, \\ 
\mathcal{L}_{t} &= \sum_{i \in \mathcal{T}} \left( \lambda_{1} ( t_i - \hat{t}_i )^2 + \lambda_{2} | t_i - \hat{t}_i | \right),
\end{align}

where $\sigma(\cdot)$ ensures the normalized prediction lies strictly within $(0, 1)$. This head is trained exclusively on the treated group $\mathcal{T}$ and applied to infer dosages for the control group. 

Simultaneously, we estimate the intensity-invariant Latent Unit CATE ($\hat{\eta}_i$) via the treatment representation. We employ a ReLU activation to structurally enforce non-negativity, satisfying the monotonicity assumption:
\begin{equation}
\hat{\eta}_i = \text{ReLU}(\mathsf{MLP}(\mathbf{u}^t_i)).
\end{equation}

Guided by Definition \ref{def:latent_ucate}, we formulate the cross-network estimators by injecting or removing the latent treatment shift (dosage $\times$ sensitivity) in the logit space. Let $\sigma^{-1}(\cdot)$ denote the logit function. The counterfactual estimations are tailored to the respective groups:

\begin{align}
&\textbf{For } i \in \mathcal{T}: \quad \hat{p}_{t}'(\mathbf{x}_i; t_i) = \sigma\Big( \sigma^{-1}(\hat{p}_{0}(\mathbf{x}_i)) + t_i \cdot \hat{\eta}_i \Big), \label{eq:xtreat} \\ 
&\textbf{For } i \in \mathcal{C}: \quad \hat{p}_{0}'(\mathbf{x}_i) = \sigma\Big( \sigma^{-1}(\hat{p}_{t}(\mathbf{x}_i, \hat{t}_i)) - \hat{t}_i \cdot \hat{\eta}_i \Big). \label{eq:xbase}
\end{align}

Equation \ref{eq:xtreat} reconstructs the treated outcome by augmenting the base prediction with the \textit{actual} treatment shift $t_i \cdot \hat{\eta}_i$. Conversely, Equation \ref{eq:xbase} reconstructs the baseline for control users by stripping the \textit{imputed} treatment shift $\hat{t}_i \cdot \hat{\eta}_i$ from their hypothetical treated prediction.

To enforce consistency between these counterfactual deductions and the factual observed labels, we employ Mean Squared Error (MSE) regularization:

\begin{align}
\mathcal{L}_{\text{x-treat}} &= \sum_{i \in \mathcal{T}} ( y_i - \hat{p}_{t}'(\mathbf{x}_i; t_i) )^2, \\
\mathcal{L}_{\text{x-base}} &= \sum_{i \in \mathcal{C}} ( y_i - \hat{p}_{0}'(\mathbf{x}_i) )^2.
\end{align}

This mutual regularization propagates supervision from treated to control (and vice versa) via the $\hat{\eta}_i$ parameter, mitigating selection bias and implicitly calibrating the latent sensitivity. The combined regularization loss is $\mathcal{L}_X = \mathcal{L}_{\text{x-treat}} + \mathcal{L}_{\text{x-base}}$.

The full HTENet is optimized end-to-end via the joint loss:
\begin{equation} 
\mathcal{L} = \lambda_{base} \mathcal{L}_{base} + \lambda_{treat} \mathcal{L}_{treat} + \lambda_{t} \mathcal{L}_{t} + \lambda_{X} \mathcal{L}_{X} + \lambda_{o} \mathcal{R}_o, 
\end{equation}
where $\lambda_{(\cdot)}$ are hyperparameters balancing factual accuracy, intensity imputation, counterfactual consistency, and orthogonality.

\subsection{Theoretical Analysis}
To theoretically justify the effectiveness of our proposed framework, we analyze the convergence properties of the unit uplift estimator. We show that under mild assumptions regarding the dosage distribution and the consistency of base estimators, the proposed counterfactual calibration loss guarantees the identification of the true unit uplift.

\begin{assumption}[Non-trivial Intensity and Common Support]\label{intensity}
% The treatment intensity $t$ is bounded away from zero, i.e., there exists a constant $t_{\min} > 0$ such that $|t| \geq t_{\min}$ almost surely for all valid interventions. Furthermore, we assume distributional consistency: the conditional distribution of feasible intensities $P(t \mid \mathbf{x})$ depends solely on the feature set $\mathbf{x}$ (e.g., item price, marketing strategy). This implies that the dosage mechanism learned from the treated population $\mathcal{T}$ generalizes to the full sample space $\mathcal{S} = \mathcal{T} \cup \mathcal{C}$. Formally, for any user-item pair $\mathbf{x}$, the support of the potential intensity distribution in the control group is contained within the support of the treated group:
The treatment intensity $t$ is bounded away from zero, i.e., there exists a constant $t_{min} > 0$ such that $|t|\ge t_{min}$ almost surely for all valid interventions. Furthermore, the conditional distribution of feasible intensities $P(t|x)$ depends solely on the feature set $x$. The support of the potential intensity distribution in the control group is contained within the support of the treated group:

\begin{equation}
\text{supp}(P(t \mid \mathbf{x}, i \in \mathcal{C})) \subseteq \text{supp}(P(t \mid \mathbf{x}, i \in \mathcal{T})).
\end{equation}
\end{assumption}

\begin{assumption}[Consistency of Base Estimators]\label{estimators}
The estimators for the treated outcome $\hat{y}_t$ and the intensity $\hat{t}$ are consistent, meaning that as the sample size $N \to \infty$, $\hat{p}_t \xrightarrow{p} p_t$ and $\hat{t} \xrightarrow{p} t$.
\end{assumption}

Based on these assumptions, we establish the following theorem regarding the convergence of the unit uplift estimator $\hat{\eta}$.

\begin{theorem}[Convergence of Unit Uplift]\label{Convergence of Unit Uplift}
Under Assumptions \ref{intensity} and \ref{estimators}, minimizing the counterfactual calibration loss $\mathcal{L}_{x-treat}$ and $\mathcal{L}_{x-base}$ ensures that the estimated unit uplift $\hat{\eta}$ converges in probability to the true unit uplift $\eta$.
\end{theorem}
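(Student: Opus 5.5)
The argument goes through the population minimizer of the calibration loss, and the key step is to show that this minimizer forces the logit relation of Assumption~\ref{assump:latent_linearity} to hold pointwise. For each $\mathbf{x}$, the inner minimizer of a squared loss is the conditional mean. Under consistency and unconfoundedness, the population version of $\mathcal{L}_{\text{x-base}}$ is minimized over $\hat{\eta}(\mathbf{x})$ when
\begin{equation*}
\sigma\big(\sigma^{-1}(\hat{p}_t(\mathbf{x},\hat{t})) - \hat{t}\,\hat{\eta}(\mathbf{x})\big) = \mathbb{E}[Y \mid \mathbf{x}, i\in\mathcal{C}] = p_0(\mathbf{x}).
\end{equation*}
Likewise, $\mathcal{L}_{\text{x-treat}}$ is minimized when $\sigma(\sigma^{-1}(\hat{p}_0(\mathbf{x})) + t\,\hat{\eta}(\mathbf{x})) = p_t(\mathbf{x},t)$. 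Since $\sigma$ is strictly monotone, the first condition is equivalent to
\begin{equation*}
\hat{t}\,\hat{\eta}(\mathbf{x}) = \sigma^{-1}(\hat{p}_t(\mathbf{x},\hat{t})) - \sigma^{-1}(p_0(\mathbf{x})).
\end{equation*}
Then $\hat{\eta}(\mathbf{x})$ is given explicitly as a ratio, which is well defined because Assumption~\ref{intensity} gives $\hat{t}\ge t_{\min}>0$. The projection onto $[t_{\min},t_{\max}]$ in the intensity head enforces this for the estimator itself.

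The second step passes to the limit using Assumption~\ref{estimators}. We have $\hat{p}_t \xrightarrow{p} p_t$ and $\hat{t}\xrightarrow{p} t$. The support condition of Assumption~\ref{intensity} ensures that $p_t(\mathbf{x},\cdot)$ is evaluated only at intensities observed in the treated group, so its consistency applies on the control population. Define $g(a,b) = (\sigma^{-1}(a) - \sigma^{-1}(p_0(\mathbf{x})))/b$. This map is continuous on $(0,1)\times[t_{\min},\infty)$. By the continuous mapping theorem, $\hat{\eta}(\mathbf{x}) \xrightarrow{p} (\sigma^{-1}(p_t(\mathbf{x},t)) - \sigma^{-1}(p_0(\mathbf{x})))/t$, which equals $\eta(\mathbf{x})$ by Assumption~\ref{assump:latent_linearity}. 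The treated-side condition gives the same limit, provided $\hat{p}_0$ is consistent, which the minimizer of $\mathcal{L}_{base}$ supplies on the control data. The two conditions therefore do not conflict, and the true $\eta$ is a joint minimizer. Because $\eta>0$, the ReLU constraint is inactive in the limit.

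The main obstacle is passing from the argmin of the empirical loss to the pointwise population argmin. This requires uniform convergence of the empirical loss to its expectation, so that argmin consistency holds in the style of an M-estimator. To get it, I would rely on boundedness: $y\in\{0,1\}$, predictions lie in $(0,1)$, and $t$ is bounded. I would use a Glivenko--Cantelli-type argument for the hypothesis class. I would also need $p_0,p_t$ bounded away from $0$ and $1$ so that $\sigma^{-1}$ stays Lipschitz near the limit. If uniform convergence is too heavy, I would state the result at the population level (infinite-sample minimizer) and then apply the continuous mapping argument above, noting the identification explicitly.
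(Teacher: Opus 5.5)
Your proposal is correct at the level of rigor at which the theorem is stated, but it follows a different route from the paper's sketch.

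\textbf{What the paper does.} It never characterizes a minimizer. It writes the calibration error directly against the true outcome as $|(\hat{p}_0 + \hat{\eta}\hat{t}) - (p_0 + \eta t)|$. This is an additive relation on the probability scale, not the logit relation of Assumption~\ref{assump:latent_linearity}. It then adds and subtracts $\hat{\eta} t$ to get $(\hat{p}_0 - p_0) + \hat{\eta}(\hat{t} - t) + t(\hat{\eta} - \eta)$. The first two pieces go to zero by consistency, and $|t| \ge t_{\min}$ forces $\hat{\eta} - \eta \to 0$.

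\textbf{What you do differently.} (i) You explain why a squared loss against binary labels targets $p_0$ and $p_t$ at all, via the conditional-mean property plus consistency and unconfoundedness; the paper presupposes this. (ii) You stay in logit space, matching the X-network estimators the model actually uses. (iii) You replace the add-and-subtract step with an explicit inversion of $\sigma$ and the continuous mapping theorem, with $t_{\min}$ keeping the division continuous. Your control-side branch also uses exactly the nuisances that Assumption~\ref{estimators} covers, namely $\hat{p}_t$ and $\hat{t}$. The paper's sketch instead invokes consistency of $\hat{p}_0$, which that assumption does not state, and it tacitly needs $\hat{\eta}$ stochastically bounded for the middle term. You rightly flag $\hat{p}_0$ consistency as an extra ingredient on the treated side.

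\textbf{What the paper's route buys.} It is brief, and its linear decomposition implicitly gives a quantitative bound, $|\hat{\eta} - \eta| \le t_{\min}^{-1}\bigl(\text{error} + |\hat{p}_0 - p_0| + |\hat{\eta}|\,|\hat{t} - t|\bigr)$, which would transfer nuisance rates directly. Your argument yields consistency only, unless you add the Lipschitz bound on $\sigma^{-1}$ that you mention.

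\textbf{Two spots to tighten.} First, $\hat{p}_t(\mathbf{x},\cdot)$ is evaluated at the random argument $\hat{t}$, so you need its consistency uniformly over the intensity support, not pointwise. Given that, latent linearity makes $(\sigma^{-1}(p_t(\mathbf{x},s)) - \sigma^{-1}(p_0(\mathbf{x})))/s = \eta(\mathbf{x})$ for every $s$ in the support, so $\hat{t} \to t$ is only needed to keep $\hat{t}$ inside it. Second, on the treated side $\hat{\eta}$ cannot depend on $t$, so your per-$t$ equation holds only when $\hat{p}_0 = p_0$. You need an argmin-continuity step resting on uniqueness of the limiting minimizer, which $T \ge t_{\min} > 0$ and strict monotonicity of $\sigma$ guarantee.
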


\noindent \textbf{Proof Sketch}. We take  $\mathcal{L}_{x-treat}$ as an example; $\mathcal{L}_{x-base}$ can be proven in the same manner. Recall the Assumption\ref{assump:latent_linearity} that the counterfactual calibration loss for the control group enforces the constraint: ${p}_0(\mathbf{x}_i) + {\eta} \cdot {t} \approx {p}_t(\mathbf{x}_i, t_i)$.

Let $\Delta = |\hat{\eta} - \eta|$ be the objective to be minimized. Substituting the true ${p}_t(\mathbf{x}_i, t_i)$, we analyze the error term:

\begin{align}
    \mathcal{L}_{\text{x-treat}} &= |(\hat{p}_0(\mathbf{x}_i) + \hat{\eta}\hat{t}) - ({p}_0(\mathbf{x}_i) + {\eta} \cdot {t})|
\end{align}

By the triangle inequality and adding the cross-term $\hat{\eta}{t}$, we have:

\begin{align} 
\mathcal{L}_{\text{x-treat}} &= |(\hat{p}_0(\mathbf{x}_i) - {p}_0(\mathbf{x}_i)) + (\hat{\eta}\hat{t} - \eta t)| \\ 
&= |(\hat{p}_0(\mathbf{x}_i) - {p}_0(\mathbf{x}_i)) + \hat{\eta}(\hat{t} - t) + {t}(\hat{\eta} - \eta)| 
\end{align}
Minimizing $\mathcal{L}_{\text{x-treat}}$ implies $\Delta \to 0$.

By Assumption \ref{estimators}, the terms $|\hat{p}_0(\mathbf{x}_i) - {p}_0(\mathbf{x}_i)|$ and $|\hat{t} - t|$ converge to 0. Thus, the convergence of the loss relies on the term ${t}(\hat{\eta} - \eta) \to 0$. By Assumption \ref{intensity}, since $|\hat{t}| \to |t| \ge t_{\min} > 0$, the intensity is non-degenerate. Therefore, the only solution for the term to vanish is $\hat{\eta} - \eta \to 0$.
This completes the proof that $\hat{\eta} \xrightarrow{p} \eta$. $\square$
% For a more detailed proof, see the Appendix \ref{appendix_proof}.

% \subsection{Inference}

% At inference, UniMVT directly outputs the unit uplift $\hat{\eta}_i$ and estimated base CTR $\hat{p}_0(x_i) $. To guide coupon decisions, we simulate outcomes over a discrete grid of candidate coupon values $q \in [q_{\min}, q_{\max}]$:

% \begin{align}
% \Delta CTR = \hat{p}_{t}(x;q) - \hat{p}_{0}(x;q) = \sigma(\sigma^{-1}(\hat{p}_0(x_i)) + \hat{\eta}_i \cdot q) - \hat{p}_{0}(x;q).
% \end{align}

% For each $q$, compute the expected uplift-to-cost ratio (e.g., incremental CTR gain per unit cost). If this exceeds a predefined threshold, issue the voucher at intensity $q^*$; otherwise, the voucher should be withheld. This approach enables real-time, personalized intervention leveraging offline-learned causal effects.

\subsection{Inference}

At inference, UniMVT directly outputs the estimated unit sensitivity $\hat{\eta}_i$ and the base conversion probability $\hat{p}_0(x_i)$ for each user $i$. To guide coupon allocation decisions, we simulate outcomes over a discrete grid of candidate coupon values $q \in \{q_1, q_2, \dots, q_{\max}\}$.

First, we calculate the incremental conversion probability (Uplift) parameterized by $q$:

\begin{align}
\Delta \text{CTR}_i(q) &= \hat{p}_{t}(x_i;q) - \hat{p}_{0}(x_i) \nonumber \\
&= \sigma(\sigma^{-1}(\hat{p}_0(x_i)) + \hat{\eta}_i \cdot q) - \hat{p}_{0}(x_i) 
\end{align}

Then, we formulate a non-linear marginal ROI objective. By incorporating the saturation effect of the sigmoid function and the expected redemption probability $\gamma$, the expected cost inherently acts as a non-linear regularizer against the coupon face value:

\begin{align}
\text{ROI}_i(q) = \frac{\Delta \text{CTR}_i(q)}{\mathbb{E}[\text{Cost}_i(q)]} = \frac{\hat{p}_{t}(x_i;q) - \hat{p}_{0}(x_i)}{q \cdot \hat{p}_{t}(x_i;q) \cdot \gamma}
\end{align}

The optimal intervention $q^*$ is determined by evaluating $q^* = \arg\max_{q} \text{ROI}_i(q)$. If the optimal $\text{ROI}_i(q^*)$ exceeds a system-defined threshold $\lambda$, the voucher $q^*$ is issued; otherwise, the treatment is withheld ($q=0$).

\section{Experiment}

We evaluate three questions: \textbf{RQ1}, how UniMVT compares with the baselines; \textbf{RQ2}, how each component contributes; and \textbf{RQ3}, how UniMVT performs in online deployment.

\subsection{Evaluation Setup}

\subsubsection{\textbf{Dataset}}
% We evaluate our approach on both synthetic and realworld datasets to prove its effectiveness. Public datasets are excluded due to misalignment with the online platform's requirements, limiting their suitability for industrial-scale RS evaluation. %  compared to state-of-the-art methods
We evaluate our approach on both synthetic and real-world datasets to demonstrate its efficacy. 

\textbf{\textit{Synthetic Datasets}}: To evaluate the model against known causal ground truth, we synthesize three controlled benchmarks (\texttt{Syn-1}, \texttt{Syn-2}, and \texttt{Syn-3}) that closely mimic operational marketing environments. Each dataset contains 80,000 training samples and 8,000 test samples, with the test set consisting exclusively of unbiased Randomized Controlled Trial (RCT) samples.

Two design choices ensure industrial relevance. First, the individualized treatment-response curves strictly follow Assumption \ref{assump:latent_linearity}. Injecting monotonic linear shifts in logit space preserves probability bounds and produces the nonlinear saturation observed in practice. Second, treatment intensity $T$ follows unimodal distributions in \texttt{Syn-1}/\texttt{Syn-2} and a multimodal distribution in \texttt{Syn-3}, reflecting production coupon tiers. We introduce realistic selection bias by confounding $T$ with user covariates $\mathbf{X}$, so high-value users have a higher propensity to receive high-intensity interventions.

\textbf{\textit{Real-world Dataset}}: We curate an offline set from Kuaishou live-stream coupon traffic, including both non-intervened organic interactions and interactions subject to heterogeneous intervention intensities. Inputs encompass user profiles, live-stream-room context, and coupon descriptors such as intensity and style, yielding a production-scale CTR task with multi-valued interventions.

\subsubsection{\textbf{Baselines}}
% We selected representative causal inference methods for both binary and multi-value interventions:
% S-Learner, T-Learner, FlexTENet, CFRNet, DRNet, VCNet, CEVAE, DESCN, EFIN, RMNet and XTNet.
% Several baselines were originally designed for binary treatments. We extend them to multi-valued interventions by partitioning the intensity space into intervals and attaching separate prediction heads (on a shared backbone) for each interval.
% We compare UniMVT with leading causal inference methods: S-Learner~\cite{kunzel2019metalearners}, T-Learner~\cite{kunzel2019metalearners}, FlexTENet~\cite{curth2021inductive}, CFRNet~\cite{shalit2017estimating}, DRNet~\cite{schwab2020learning}, VCNet~\cite{nie2021vcnet}, CEVAE~\cite{louizos2017causal}, and DESCN~\cite{zhong2022descn}. For binary-treatment baselines, we extend their applicability to the multi-valued setting by treating the intervention dose $d$ as a distinct input feature. Their outputs are subsequently normalized by $d$ to produce comparable unit uplift estimates ($\hat{\tau}_{unit} = \hat{\tau}_{total} / d$), ensuring a consistent evaluation landscape across all multi-valued scenarios.
We compare UniMVT with S-Learner and T-Learner~\cite{kunzel2019metalearners}, FlexTENet~\cite{curth2021inductive}, CFRNet~\cite{shalit2017estimating}, DRNet~\cite{schwab2020learning}, VCNet~\cite{nie2021vcnet}, and DESCN~\cite{zhong2022descn}. We extend binary-treatment baselines by using intensity $t$ as an input; for models predicting full uplift, setting $t=1$ yields comparable unit uplift, $\hat{\tau}_{unit}=\text{Model}(X,t=1)$.

\subsubsection{\textbf{Evaluation Metrics}}
We use separate metrics for CTR debiasing and uplift ranking.

\textbf{\textit{Basic CTR Prediction:}} AUC and LogLoss measure ranking and calibration of the counterfactual conversion probability under control (no treatment).

\textbf{\textit{Multi-valued Uplift Estimation:}} Binary uplift metrics are insufficient for continuous treatments, so we use a \textit{Cumulative Slope (CS)} framework. For a test set $\mathcal{O}$ of size $N$, samples are ranked by predicted Latent Unit CATE $\hat{\eta}(\mathbf{x})$. Let $\mathcal{O}_\phi$ be the top $\phi$ fraction, $\phi\in[0,1]$, and $\hat{\beta}(\phi)$ the regression slope of outcome $Y$ on intensity $T$ within $\mathcal{O}_\phi$.

\begin{definition}[CS-AUUC]
The {Cumulative Slope Area Under the Uplift Curve (CS-AUUC)} measures rank-ordered sensitivity as the area under the realized slope $\hat{\beta}(\phi)$ scaled by bucket size:
\begin{align}
    \text{CS-AUUC} = \int_{0}^{1} \hat{\beta}(\phi) \cdot (\phi N) \, d\phi.
\end{align}
Higher CS-AUUC means more intensity-sensitive users are ranked first.
\end{definition}

\begin{definition}[CS-Qini]
The \textbf{Cumulative Slope Qini (CS-Qini)} measures ranking gain over random targeting. With $\beta_{\text{global}}$ denoting the average slope on $\mathcal{O}$,
\begin{align}
    \text{CS-Qini} = \int_{0}^{1} \left( \hat{\beta}(\phi) - \beta_{\text{global}} \right) \cdot (\phi N) \, d\phi.
\end{align}
Thus CS-Qini quantifies the incremental gain of personalized over random allocation.
\end{definition}

\subsection{Main Results (RQ1)}
% \begin{figure*}[t]
%   \centering
%   \includegraphics[width=\linewidth]{data_pics/kdd_combined_plots.pdf}
%   \caption{CS-AUUC comparison of different methods across three synthetic datasets. All curves start from the origin (0,0) and are compared against the Global Slope.} 
%   \label{fig:CS-AUUC_comparison}
% \end{figure*}
\begin{figure}[t]
  \centering
  \includegraphics[width=\linewidth]{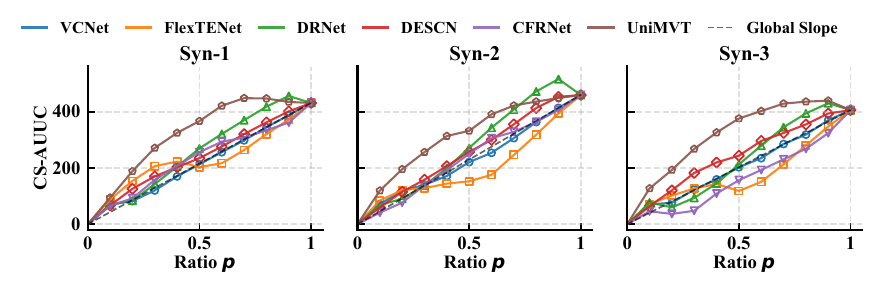}
  \caption{CS-AUUC comparison of different methods across three synthetic datasets. All curves start from the origin (0,0) and are compared against the Global Slope.} 
  \label{fig:CS-AUUC_comparison}
  \Description{Three line charts comparing CS-AUUC curves across methods on the Syn-1, Syn-2, and Syn-3 datasets.}
  \vspace{-0.6em}
\end{figure}
% Table \ref{tab:performance} reports the comparative performance of our proposed method, UniMVT, against state-of-the-art baselines on the Syn-1 dataset. The results demonstrate that UniMVT consistently achieves the best performance across both counterfactual CTR prediction and uplift estimation metrics.
% \begin{figure*}[htbp]
%     \centering
%     \subfigure[Scenario 1\label{fig:syn1}]{
%         \centering
%         \includegraphics[width=0.3\linewidth]{data_pics/method_comparison_syn_1.pdf}}
%     \hspace{0.01\linewidth}
%     \subfigure[Scenario 2\label{fig:syn2}]{
%         \centering
%         \includegraphics[width=0.3\linewidth]{data_pics/method_comparison_syn_2.pdf}}
%     \hspace{0.01\linewidth}
%     \subfigure[Scenario 3\label{fig:syn3}]{
%         \centering
%         \includegraphics[width=0.3\linewidth]{data_pics/method_comparison_syn_3.pdf}}
%     \caption{Comparison of different methods across three synthetic datasets. All curves start from the origin (0,0) and are compared against the Global Slope.}
%     \label{fig:overall_comparison}
% \end{figure*}
Table \ref{tab:overall_performance} presents a comprehensive comparison between UniMVT and seven baselines across synthetic and production datasets.

\textbf{Base CTR Evaluation:} The T-Learner's superiority over the S-Learner indicates that decoupled pathways are crucial for isolating base conversion propensity. Building on this, UniMVT achieves highly competitive, often state-of-the-art AUC and LogLoss. This confirms that our DCR layer successfully disentangles treatment-invariant representations from intervention signals.

\textbf{Uplift Evaluation:} UniMVT establishes a dominant advantage. While baselines like DRNet perform adequately on simple unimodal data, their efficacy degrades significantly in complex multimodal (\texttt{Syn-3}) and industrial (\texttt{Prod}) settings, often yielding near-random CS-Qini scores (Figure \ref{fig:CS-AUUC_comparison}). This suggests that traditional models are excessively confounded by the absolute treatment magnitude $t$, assigning higher scores to larger coupon values rather than capturing intrinsic user sensitivity.

Conversely, UniMVT consistently achieves the highest CS-AUUC and CS-Qini across all scenarios. By explicitly modeling the Latent Unit CATE $\eta(\mathbf{x})$ and leveraging X-Network regularization, our framework accurately prioritizes users with genuine treatment sensitivity, demonstrating its robustness and scalability for real-world marketing interventions.

\begin{table}[t]
\centering
\caption{Model Performance Comparison on Synthetic and Production Datasets}
\label{tab:overall_performance}
\small
\footnotesize
\setlength{\tabcolsep}{3.5pt}
\begin{tabular}{ll|cc|cc}
\toprule
\multirow{2}{*}{\textbf{Method}} & \multirow{2}{*}{\textbf{Dataset}} & \multicolumn{2}{c|}{Base CTR Evaluation} & \multicolumn{2}{c}{Uplift Evaluation} \\ \cline{3-6}
 & & AUC $\uparrow$ & LogLoss $\downarrow$ & CS-AUUC $\uparrow$ & CS-QINI $\uparrow$ \\ \midrule

% --- T-Learner ---
\multirow{4}{*}{T-Learner~\cite{kunzel2019metalearners}} 
 & Syn-1 & 0.6918 & 0.3892 & 245.05 & 31.66 \\
 & Syn-2 & 0.6932 & 0.3855 & 247.27 & 19.93 \\
 & Syn-3 & 0.7067 & 0.3799 & \underline{257.37} & \underline{56.61} \\ 
 & Prod & 0.9109 & 0.2678 & - & 0.0177 \\
\midrule

% --- S-Learner ---
\multirow{4}{*}{S-Learner~\cite{kunzel2019metalearners}} 
 & Syn-1 & 0.6937 & 0.3879 & 219.84 & 6.45 \\
 & Syn-2 & 0.7032 & 0.3813 & 225.22 & -2.12 \\
 & Syn-3 & 0.7098 & 0.3784 & 219.03 & 18.27 \\ 
 & Prod & 0.9011 & 0.2855 & - & 0.0001 \\
\midrule

% --- CFRNet ---
\multirow{4}{*}{CFRNet~\cite{shalit2017estimating}} 
 & Syn-1 & 0.6936 & 0.3863 & 196.87 & -16.52 \\
 & Syn-2 & 0.6998 & 0.3810 & 231.26 & 3.92 \\
 & Syn-3 & 0.7101 & 0.3767 & 159.14 & -41.62 \\ 
 & Prod & 0.9097 & 0.2730 & - & \underline{0.0178} \\
\midrule

% --- FlexTENet ---
\multirow{4}{*}{FlexTENet~\cite{curth2021inductive}} 
 & Syn-1 & 0.6938 & 0.3868 & 221.84 & 8.45 \\
 & Syn-2 & 0.7026 & 0.3805 & 195.21 & -32.14 \\
 & Syn-3 & 0.7113 & 0.3757 & 171.99 & -28.77 \\ 
 & Prod & 0.8870 & 0.2997 & - & 0.0166 \\
\midrule

% --- DESCN ---
\multirow{4}{*}{DESCN~\cite{zhong2022descn}} 
 & Syn-1 & 0.6851 & 0.3910 & 235.92 & 22.52 \\
 & Syn-2 & 0.6958 & 0.3823 & 251.70 & 24.36 \\
 & Syn-3 & 0.7050 & 0.3785 & 236.96 & 36.20 \\ 
 & Prod & 0.9394 & 0.2161 & - & 0.0170 \\
\midrule

% --- DRNet ---
\multirow{4}{*}{DRNet~\cite{schwab2020learning}} 
 & Syn-1 & 0.6944 & 0.3864 & \underline{251.51} & \underline{38.12} \\
 & Syn-2 & \textbf{0.7028} & \textbf{0.3799} & \underline{270.12} & \underline{42.78} \\
 & Syn-3 & 0.7112 & 0.3759 & 194.31 & -6.45 \\ 
 & Prod & 0.9374 & 0.2155 & - & -0.0003 \\
 \midrule

% --- VCNet ---
\multirow{4}{*}{VCNet~\cite{nie2021vcnet}} 
 & Syn-1 & \underline{0.6942} & \underline{0.3862} & 212.26 & -1.14 \\
 & Syn-2 & 0.7018 & \underline{0.3801} & 225.31 & -2.04 \\
 & Syn-3 & \underline{0.7117} & \underline{0.3757} & 200.75 & -0.01 \\ 
 & Prod & \underline{0.9401} & \underline{0.2100} & - & 0.0005 \\
\midrule

% --- UniMVT (Ours) ---
\multirow{4}{*}{\textbf{UniMVT}} 
 & Syn-1 & \textbf{0.6945} & \textbf{0.3865} & \textbf{316.48} & \textbf{103.09} \\
 & Syn-2 & \underline{0.7025} & \underline{0.3801} & \textbf{311.29} & \textbf{83.94} \\
 & Syn-3 & \textbf{0.7122} & \textbf{0.3756} & \textbf{301.10} & \textbf{100.34} \\  
 & Prod & \textbf{0.9471} & \textbf{0.1776} & - & \textbf{0.0293} \\

\bottomrule
\end{tabular}
\vspace{-2em}
\end{table}
% \begin{table}[h]
% \centering
% \caption{Performance Comparison on Production Dataset}
% \label{tab:production_results}
% \small
% \begin{tabular}{l|cc|c}
% \toprule
% \textbf{Method} & \multicolumn{2}{c|}{\textbf{Base CTR Evaluation}} & \textbf{Uplift} \\ \cline{2-4}
% (Production) & AUC $\uparrow$ & LogLoss $\downarrow$ & CS-QINI $\uparrow$ \\ \midrule
% T-Learner~\cite{kunzel2019metalearners}   & 0.9109 & 0.2678 & 0.0177 \\
% S-Learner~\cite{kunzel2019metalearners}   & 0.9011 & 0.2855 & 0.0001 \\
% CFRNet~\cite{shalit2017estimating}      & 0.9097 & 0.2730 & \underline{0.0178} \\
% FlexTENet~\cite{curth2021inductive}    & 0.8870 & 0.2997 & 0.0166 \\
% DESCN~\cite{zhong2022descn}        & 0.9394 & 0.2161 & 0.0170 \\
% DRNet~\cite{schwab2020learning}        & 0.9374 & 0.2155 & -0.0003 \\
% VCNet~\cite{nie2021vcnet}        & \underline{0.9401} & \underline{0.2100} & 0.0005 \\ \midrule
% \textbf{UniMVT (Ours)} & \textbf{0.9471} & \textbf{0.1776} & \textbf{0.0293} \\
% \bottomrule
% \end{tabular}
% \end{table}

\subsection{Ablation Analysis (RQ2)}
% Table \ref{tab:ablation} validates the DCR and X-Network as essential, with their removal consistently degrading performance. Notably, omitting the auxiliary Treatment Tower preserves Base CTR AUC but yields a negative CS-Qini. This confirms that without explicit causal constraints, the model overfits observational biases, maximizing predictive accuracy at the expense of true uplift estimation.

Table \ref{tab:ablation} confirms that each component is necessary. Removing DCR degrades all metrics, while omitting X-Network collapses CS-Qini, validating their roles in disentanglement and counterfactual regularization. Removing linear monotonicity through non-logit shifts or unconstrained fitting also drives CS-Qini near zero despite high Base CTR AUC: without structural causal priors, deep networks optimize factual accuracy while overfitting observational bias.

% \begin{table}[t]
% \centering
% \caption{Ablation study of our UniMVT on production dataset.}
% \label{tab:ablation}
% \small
% \setlength{\tabcolsep}{6pt} % Adjusted for better readability
% \begin{tabular}{l|cc|c}
% \toprule
% \textbf{Metrics} & \multicolumn{2}{c|}{\textbf{Base CTR Evaluation}} & \textbf{Uplift Evaluation} \\ 
% \cmidrule(lr){2-3} \cmidrule(lr){4-4}
% & AUC $\uparrow$ & LogLoss $\downarrow$ & CS-QINI $\uparrow$ \\ \midrule

% w/o Logit Space                & 0.9457 & 0.1812 & 0.0008 \\
% w/o latent linear relationship                & {0.9401} & {0.1916} & 0.0004 \\
% w/o Orthogonal Regularization & {0.9444 } & {0.1868 } & 0.0042 \\
% w/o Shared Experts  & {0.9413 } & {0.2122} & -0.0072 \\ 
% w/o w/o couterfactul loss  & {0.9444} & {0.1868} & 0.0042 \\ \midrule
% \textbf{UniMVT}  & \textbf{0.9471} & \textbf{0.1776} & \textbf{0.0293} \\
% \bottomrule
% \end{tabular}
% \end{table}

\begin{table}[t]
\centering
\caption{Ablation study of UniMVT components on the production dataset.}
\label{tab:ablation}
\small
\footnotesize
\setlength{\tabcolsep}{3.5pt}
\begin{tabular}{l|cc|c}
\toprule
\multirow{2}{*}{\textbf{Model Variant}} & \multicolumn{2}{c|}{\textbf{Base CTR Evaluation}} & \textbf{Uplift Evaluation} \\ 
\cmidrule(lr){2-3} \cmidrule(lr){4-4}
& \textbf{AUC} $\uparrow$ & \textbf{LogLoss} $\downarrow$ & \textbf{CS-Qini} $\uparrow$ \\ \midrule

\multicolumn{4}{l}{\textit{Ablation on Linear Monotonicity}} \\
\quad w/o Logit Space                 & 0.9457 & 0.1812 & 0.0008 \\
\quad Unconstrained Treatment         & 0.9401 & 0.1916 & 0.0004 \\ \midrule

\multicolumn{4}{l}{\textit{Ablation on DCR Structure}} \\
\quad w/o Orthogonal Reg.             & 0.9460 & 0.1791 & 0.0097 \\
\quad w/o Shared Experts              & 0.9432 & 0.1801 & 0.0132 \\ \midrule

\multicolumn{4}{l}{\textit{Ablation on X-Network}} \\
\quad w/o Counterfactual Loss         & 0.9444 & 0.1868 & 0.0042 \\ \midrule

\textbf{UniMVT (Full Model)}          & \textbf{0.9471} & \textbf{0.1776} & \textbf{0.0293} \\
\bottomrule
\end{tabular}
\vspace{-0.5em}
\end{table}

\subsection{Online A/B Test (RQ3)}

% We evaluate our approach on a large-scale coupon intervention setting in the Kuaishou advertising platform-a short-video app with over 400 million daily active users.

% \subsubsection{System Overview}
% As Figure\ref{fig:engine}, in the serving engine, each app visit triggers a request that flows through recall and coarse ranking to select relevant items from an item pool. These candidates are fed to a fine-ranking CTR model. The model outputs both the base CTR and the per-user, per-item unit uplift attributable to issuing a coupon. A downstream strategy module then chooses the coupon with the maximum predicted revenue gain minus cost under platform constraints. Our method operates on the CTR model, improving the base CTR estimation and the uplift guidance simultaneously.

% \subsubsection{Online Experimental Results}
We conducted an online A/B test for UniMVT on a randomly sampled 10\% of Kuaishou's production traffic. Treatment and control buckets were mutually exclusive to prevent interference. We compare our approach against two established production baselines:

\begin{itemize}[leftmargin=8pt]
\item \textbf{Base (Two-Stage Model)}: The legacy production approach independently trains separate learners for base CTR and residual uplift. It suffers from error accumulation and fails to capture shared causal dynamics.
\item \textbf{Unified Model (Shared-Bottom)}: A multi-task framework uses a shared encoder with separate prediction towers to jointly estimate base CTR and uplift, representing a stronger representation-learning baseline.
\end{itemize}

% As for metrics, we report revenue in coupon and non-coupon scenarios.
% Table \ref{tab:ab} reports performance relative to the baseline. Compared to the Treatment-Agnostic Model, the proposed method delivers consistent improvements across all online metrics, indicating that jointly optimizing base click-through rate and intervention uplift yields higher-quality coupon allocation decisions. Relative to the T-Learner (Shared-Bottom), it achieves further gains, suggesting that full-space causal modeling enables more accurate CTR estimation and superior allocation quality. Following online A/B experiments, the system has been deployed to production for live-stream, direct-response coupon ad slots, with ongoing expansion to additional scenarios.

To evaluate online performance, we monitor business metrics---Revenue ($\Delta$Rev) and Return on Investment ($\Delta$ROI)---and calibration accuracy via Predicted Click-Over-Click Error ($\text{PCOC-Err}=|\text{PCOC}-1|$).

As shown in Table \ref{tab:ab}, UniMVT consistently outperforms the baselines. Compared with the Base Two-Stage model, it increases coupon revenue by 18.14\% and overall ROI by 8.80. Moreover, UniMVT reduces coupon PCOC-Err by 86.51\% and no-coupon PCOC-Err by 50.00\%. This substantial calibration improvement confirms that DCR and X-Network effectively isolate causal signals from confounding biases, enabling precise and granular coupon allocation.

Following the A/B test, UniMVT has been fully deployed in Kuaishou's live-stream direct-response ad slots, with ongoing expansion to other marketing scenarios.

% \begin{table}[htbp]
%   \centering
%   \caption{Online A/B Test Results}
%   \label{tab:ab}
%   \small
%   \begin{tabular}{lccc}
%     \toprule
%     \textbf{Model Settings} & \textbf{$\Delta$Rev (Coupon)}  & \textbf{$\Delta$Rev (No-Coupon)} \\  % & \textbf{Inc. ROI}
%     \midrule
%     Base & - & - \\ %  +1.874\% & 8.80 
%     Unified & +7.53\% & - \\ % & 8.80 
%     UniMVT & +18.14\% & +0.367\% \\ 
%     \bottomrule
%   \end{tabular}
% \end{table}

% \begin{table}[htbp]
%   \centering
%   \caption{Online A/B Test Results. Metrics denote relative improvements over the control group in each respective testing period. $\Delta$PCOC-Error denotes the relative reduction in calibration deviation ($|\text{PCOC} - 1|$).}
%   \label{tab:ab}
%   \footnotesize % 从 \small 进一步缩小到 \footnotesize
%   \setlength{\tabcolsep}{0.6mm} % 极限压缩列间距 (原为1.2mm)
%   \resizebox{\columnwidth}{!}{% 强制缩放到单栏宽度
%   \begin{tabular}{lccccc}
%     \toprule
%     \textbf{Model} & \makecell[c]{\textbf{$\Delta$Rev} \\ {Coupon}} & \makecell[c]{\textbf{$\Delta$Rev} \\ {No-Coupon}} & \textbf{$\Delta$ROI} & \makecell[c]{\textbf{$\Delta$PCOC-Err} \\ {Coupon}} & \makecell[c]{\textbf{$\Delta$PCOC-Err} \\ {No-Coupon}} \\ 
%     \midrule
%     Base & 0.00\% & 0.00\% & 0.00 & 0.00\% & 0.00\% \\ 
%     Unified Model & +7.53\% & +0.12\% & +3.53 & -70.66\% & -4.17\% \\ 
%     \midrule
%     \textbf{UniMVT} & \textbf{+18.14\%} & \textbf{+0.37\%} & +\textbf{8.80} & \textbf{-86.51\%} & \textbf{-50.00\%} \\ 
%     \bottomrule
%   \end{tabular}
%   }
% \end{table}

\begin{table}[htbp]
  \centering
  % \caption{Online A/B Test Results. Metrics denote relative improvements over the control group in each respective testing period. $\Delta$PCOC-Error denotes the relative reduction in calibration deviation ($|\text{PCOC} - 1|$).}
  \caption{Online A/B test results. Metrics report relative gains over the control per test period; $\Delta$PCOC-Error is the relative reduction in calibration deviation ($|\text{PCOC}-1|$).}
  \label{tab:ab}
  \resizebox{\columnwidth}{!}{%
  \begin{tabular}{l | ccc | cc}
    \toprule
    % 第一层：分类表头 (注意 \multicolumn 这里也需要加 | 来对齐下方的竖线)
    & \multicolumn{3}{c|}{\textbf{Business Metrics ($\uparrow$)}} 
    & \multicolumn{2}{c}{\textbf{Calibration Metrics ($\downarrow$)}} \\ 
    \cmidrule(lr){2-4} \cmidrule(lr){5-6}
    
    % 第二层：具体指标
    \textbf{Model} 
    & \makecell[c]{\textbf{$\Delta$Rev} \\ {Coupon}} 
    & \makecell[c]{\textbf{$\Delta$Rev} \\ {No-Coupon}} 
    & \textbf{$\Delta$ROI} 
    & \makecell[c]{\textbf{$\Delta$PCOC-Err} \\ {Coupon}} 
    & \makecell[c]{\textbf{$\Delta$PCOC-Err} \\ {No-Coupon}}  \\ 
    \midrule
    
    Base & 0.00\% & 0.00\% & +0.00 & 0.00 & 0.00\% \\ 
    Unified Model & +7.53\% & +0.12\% & +3.53 & -70.66\% & -4.17\% \\ 
    \midrule
    UniMVT & \textbf{+18.14\%} & \textbf{+0.37\%} & \textbf{+8.80} & \textbf{-86.51\%} & \textbf{-50.00\%} \\ 
    \bottomrule
  \end{tabular}%
  }
  \vspace{-0.5em}
\end{table}

% \begin{figure}[htbp]
%   \centering
%   \includegraphics[width=\linewidth]{pics/uplift_flow.pdf}
%   \caption{Illustration of the online pipeline.}
%   \label{fig:engine}
% \end{figure}

\section{Conclusion}
We proposed UniMVT to address confounding in multi-valued treatments by combining disentangled representation learning with full-space counterfactual inference. Offline and online results, supported by convergence guarantees, demonstrate improved debiased CTR estimation, heterogeneous uplift modeling, and marketing efficiency. Future work will explore more complex treatment-response curves.

\section{GenAI Usage Disclosure}
We used ChatGPT strictly for language polishing and text refinement of author-written content to improve the manuscript's readability. No generative AI tools were used for research design, code or data generation, experiments, analysis, results, conclusions, or figure creation. All authors reviewed the resulting text and assume full responsibility for the completeness, accuracy, and integrity of this manuscript.

%%
%% The acknowledgments section is defined using the "acks" environment
%% (and NOT an unnumbered section). This ensures the proper
%% identification of the section in the article metadata, and the
%% consistent spelling of the heading.
% \begin{acks}
% To Robert, for the bagels and explaining CMYK and color spaces.
% \end{acks}

%%
%% The next two lines define the bibliography style to be used, and
%% the bibliography file.
\bibliographystyle{ACM-Reference-Format}
\balance
\bibliography{sample-base}

%%% -*-BibTeX-*-
%%% Do NOT edit. File created by BibTeX with style
%%% ACM-Reference-Format-Journals [18-Jan-2012].

\begin{thebibliography}{36}

%%% ====================================================================
%%% NOTE TO THE USER: you can override these defaults by providing
%%% customized versions of any of these macros before the \bibliography
%%% command.  Each of them MUST provide its own final punctuation,
%%% except for \shownote{} and \showURL{}.  The latter does
%%% not use final punctuation, in order to avoid confusing it with
%%% the Web address.
%%%
%%% To suppress output of a particular field, define its macro to expand
%%% to an empty string, or better, \unskip, like this:
%%%
%%% \newcommand{\showURL}[1]{\unskip}   % LaTeX syntax
%%%
%%% \def \showURL #1{\unskip}           % plain TeX syntax
%%%
%%% ====================================================================

\ifx \showCODEN    \undefined \def \showCODEN     #1{\unskip}     \fi
\ifx \showISBNx    \undefined \def \showISBNx     #1{\unskip}     \fi
\ifx \showISBNxiii \undefined \def \showISBNxiii  #1{\unskip}     \fi
\ifx \showISSN     \undefined \def \showISSN      #1{\unskip}     \fi
\ifx \showLCCN     \undefined \def \showLCCN      #1{\unskip}     \fi
\ifx \shownote     \undefined \def \shownote      #1{#1}          \fi
\ifx \showarticletitle \undefined \def \showarticletitle #1{#1}   \fi
\ifx \showURL      \undefined \def \showURL       {\relax}        \fi
% The following commands are used for tagged output and should be
% invisible to TeX
\providecommand\bibfield[2]{#2}
\providecommand\bibinfo[2]{#2}
\providecommand\natexlab[1]{#1}
\providecommand\showeprint[2][]{arXiv:#2}
\makeatletter
\@ifundefined{NAT@parse@date}{}{\let\NAT@parse@date@orig\NAT@parse@date}
\@ifundefined{NAT@parse@date}{}{\def\NAT@parse@date#1#2#3#4#5#6@@{\NAT@parse@date@orig#1#2#3#4#5#6@@\def\NAT@tempyear{0000}\def\NAT@tempexlab{{?}}\ifx\NAT@year\NAT@tempyear\ifx\NAT@exlab\NAT@tempexlab\def\NAT@date{[n.\,d.]}\else\edef\NAT@date{[n.\,d.]\NAT@exlab}\fi\fi}}
\makeatother

\bibitem[Ai et~al\mbox{.}(2024)]%
        {ai2024improve}
\bibfield{author}{\bibinfo{person}{Meng Ai}, \bibinfo{person}{Zhuo Chen}, \bibinfo{person}{Jibin Wang}, \bibinfo{person}{Jing Shang}, \bibinfo{person}{Tao Tao}, {and} \bibinfo{person}{Zhen Li}.} \bibinfo{year}{2024}\natexlab{}.
\newblock \showarticletitle{Improve roi with causal learning and conformal prediction}. In \bibinfo{booktitle}{\emph{2024 IEEE 40th International Conference on Data Engineering (ICDE)}}. IEEE, \bibinfo{pages}{598--610}.
\newblock


\bibitem[Bi et~al\mbox{.}(2022)]%
        {bi2022mtrec}
\bibfield{author}{\bibinfo{person}{Qiwei Bi}, \bibinfo{person}{Jian Li}, \bibinfo{person}{Lifeng Shang}, \bibinfo{person}{Xin Jiang}, \bibinfo{person}{Qun Liu}, {and} \bibinfo{person}{Hanfang Yang}.} \bibinfo{year}{2022}\natexlab{}.
\newblock \showarticletitle{Mtrec: Multi-task learning over bert for news recommendation}. In \bibinfo{booktitle}{\emph{Findings of the association for computational linguistics: ACL 2022}}. \bibinfo{pages}{2663--2669}.
\newblock


\bibitem[Caruana(1997)]%
        {caruana1997multitask}
\bibfield{author}{\bibinfo{person}{Rich Caruana}.} \bibinfo{year}{1997}\natexlab{}.
\newblock \showarticletitle{Multitask learning}.
\newblock \bibinfo{journal}{\emph{Machine learning}}  \bibinfo{volume}{28} (\bibinfo{year}{1997}), \bibinfo{pages}{41--75}.
\newblock


\bibitem[Chang et~al\mbox{.}(2023a)]%
        {chang2023twin}
\bibfield{author}{\bibinfo{person}{Jianxin Chang}, \bibinfo{person}{Chenbin Zhang}, \bibinfo{person}{Zhiyi Fu}, \bibinfo{person}{Xiaoxue Zang}, \bibinfo{person}{Lin Guan}, \bibinfo{person}{Jing Lu}, \bibinfo{person}{Yiqun Hui}, \bibinfo{person}{Dewei Leng}, \bibinfo{person}{Yanan Niu}, \bibinfo{person}{Yang Song}, {et~al\mbox{.}}} \bibinfo{year}{2023}\natexlab{a}.
\newblock \showarticletitle{TWIN: TWo-stage interest network for lifelong user behavior modeling in CTR prediction at kuaishou}. In \bibinfo{booktitle}{\emph{Proceedings of the 29th ACM SIGKDD Conference on Knowledge Discovery and Data Mining}}. \bibinfo{pages}{3785--3794}.
\newblock


\bibitem[Chang et~al\mbox{.}(2023b)]%
        {ppnet2023}
\bibfield{author}{\bibinfo{person}{Jianxin Chang}, \bibinfo{person}{Chenbin Zhang}, \bibinfo{person}{Yiqun Hui}, \bibinfo{person}{Dewei Leng}, \bibinfo{person}{Yanan Niu}, \bibinfo{person}{Yang Song}, {and} \bibinfo{person}{Kun Gai}.} \bibinfo{year}{2023}\natexlab{b}.
\newblock \showarticletitle{PEPNet: Parameter and Embedding Personalized Network for Infusing with Personalized Prior Information}. In \bibinfo{booktitle}{\emph{Proceedings of the 29th ACM SIGKDD Conference on Knowledge Discovery and Data Mining}} (Long Beach, CA, USA) \emph{(\bibinfo{series}{KDD '23})}. \bibinfo{publisher}{Association for Computing Machinery}, \bibinfo{address}{New York, NY, USA}, \bibinfo{pages}{3795--3804}.
\newblock


\bibitem[Curth and Van~der Schaar(2021)]%
        {curth2021inductive}
\bibfield{author}{\bibinfo{person}{Alicia Curth} {and} \bibinfo{person}{Mihaela Van~der Schaar}.} \bibinfo{year}{2021}\natexlab{}.
\newblock \showarticletitle{On inductive biases for heterogeneous treatment effect estimation}.
\newblock \bibinfo{journal}{\emph{Advances in Neural Information Processing Systems}}  \bibinfo{volume}{34} (\bibinfo{year}{2021}), \bibinfo{pages}{15883--15894}.
\newblock


\bibitem[Fu et~al\mbox{.}(2025)]%
        {fu2025unified}
\bibfield{author}{\bibinfo{person}{Zichuan Fu}, \bibinfo{person}{Xiangyang Li}, \bibinfo{person}{Chuhan Wu}, \bibinfo{person}{Yichao Wang}, \bibinfo{person}{Kuicai Dong}, \bibinfo{person}{Xiangyu Zhao}, \bibinfo{person}{Mengchen Zhao}, \bibinfo{person}{Huifeng Guo}, {and} \bibinfo{person}{Ruiming Tang}.} \bibinfo{year}{2025}\natexlab{}.
\newblock \showarticletitle{A unified framework for multi-domain ctr prediction via large language models}.
\newblock \bibinfo{journal}{\emph{ACM Transactions on Information Systems}} \bibinfo{volume}{43}, \bibinfo{number}{5} (\bibinfo{year}{2025}), \bibinfo{pages}{1--33}.
\newblock


\bibitem[Guo et~al\mbox{.}(2017)]%
        {guo2017deepfm}
\bibfield{author}{\bibinfo{person}{Huifeng Guo}, \bibinfo{person}{TANG Ruiming}, \bibinfo{person}{Yunming Ye}, \bibinfo{person}{Zhenguo Li}, {and} \bibinfo{person}{Xiuqiang He}.} \bibinfo{year}{2017}\natexlab{}.
\newblock \showarticletitle{DeepFM: A Factorization-Machine based Neural Network for CTR Prediction}. In \bibinfo{booktitle}{\emph{Proceedings of the Twenty-Sixth International Joint Conference on Artificial Intelligence}}. International Joint Conferences on Artificial Intelligence Organization.
\newblock


\bibitem[He et~al\mbox{.}(2024a)]%
        {he2024rankability}
\bibfield{author}{\bibinfo{person}{Bowei He}, \bibinfo{person}{Yunpeng Weng}, \bibinfo{person}{Xing Tang}, \bibinfo{person}{Ziqiang Cui}, \bibinfo{person}{Zexu Sun}, \bibinfo{person}{Liang Chen}, \bibinfo{person}{Xiuqiang He}, {and} \bibinfo{person}{Chen Ma}.} \bibinfo{year}{2024}\natexlab{a}.
\newblock \showarticletitle{Rankability-enhanced revenue uplift modeling framework for online marketing}. In \bibinfo{booktitle}{\emph{Proceedings of the 30th ACM SIGKDD Conference on Knowledge Discovery and Data Mining}}. \bibinfo{pages}{5093--5104}.
\newblock


\bibitem[He et~al\mbox{.}(2024b)]%
        {he2024rankabilityenhancedrevenueupliftmodeling}
\bibfield{author}{\bibinfo{person}{Bowei He}, \bibinfo{person}{Yunpeng Weng}, \bibinfo{person}{Xing Tang}, \bibinfo{person}{Ziqiang Cui}, \bibinfo{person}{Zexu Sun}, \bibinfo{person}{Liang Chen}, \bibinfo{person}{Xiuqiang He}, {and} \bibinfo{person}{Chen Ma}.} \bibinfo{year}{2024}\natexlab{b}.
\newblock \bibinfo{title}{Rankability-enhanced Revenue Uplift Modeling Framework for Online Marketing}.
\newblock
\showeprint[arxiv]{2405.15301}~[cs.LG]


\bibitem[Jiang et~al\mbox{.}(2024)]%
        {jiang2024automatic}
\bibfield{author}{\bibinfo{person}{Shen Jiang}, \bibinfo{person}{Guanghui Zhu}, \bibinfo{person}{Yue Wang}, \bibinfo{person}{Chunfeng Yuan}, {and} \bibinfo{person}{Yihua Huang}.} \bibinfo{year}{2024}\natexlab{}.
\newblock \showarticletitle{Automatic Multi-Task Learning Framework with Neural Architecture Search in Recommendations}. In \bibinfo{booktitle}{\emph{Proceedings of the 30th ACM SIGKDD Conference on Knowledge Discovery and Data Mining}}. \bibinfo{pages}{1290--1300}.
\newblock


\bibitem[Ke et~al\mbox{.}(2021)]%
        {ke2021addressing}
\bibfield{author}{\bibinfo{person}{Wenwei Ke}, \bibinfo{person}{Chuanren Liu}, \bibinfo{person}{Xiangfu Shi}, \bibinfo{person}{Yiqiao Dai}, \bibinfo{person}{Philip~S Yu}, {and} \bibinfo{person}{Xiaoqiang Zhu}.} \bibinfo{year}{2021}\natexlab{}.
\newblock \showarticletitle{Addressing exposure bias in uplift modeling for large-scale online advertising}. In \bibinfo{booktitle}{\emph{2021 IEEE International Conference on Data Mining (ICDM)}}. IEEE, \bibinfo{pages}{1156--1161}.
\newblock


\bibitem[K{\"u}nzel et~al\mbox{.}(2019)]%
        {kunzel2019metalearners}
\bibfield{author}{\bibinfo{person}{S{\"o}ren~R K{\"u}nzel}, \bibinfo{person}{Jasjeet~S Sekhon}, \bibinfo{person}{Peter~J Bickel}, {and} \bibinfo{person}{Bin Yu}.} \bibinfo{year}{2019}\natexlab{}.
\newblock \showarticletitle{Metalearners for estimating heterogeneous treatment effects using machine learning}.
\newblock \bibinfo{journal}{\emph{Proceedings of the national academy of sciences}} \bibinfo{volume}{116}, \bibinfo{number}{10} (\bibinfo{year}{2019}), \bibinfo{pages}{4156--4165}.
\newblock


\bibitem[Li et~al\mbox{.}(2023b)]%
        {li2023adatt}
\bibfield{author}{\bibinfo{person}{Danwei Li}, \bibinfo{person}{Zhengyu Zhang}, \bibinfo{person}{Siyang Yuan}, \bibinfo{person}{Mingze Gao}, \bibinfo{person}{Weilin Zhang}, \bibinfo{person}{Chaofei Yang}, \bibinfo{person}{Xi Liu}, {and} \bibinfo{person}{Jiyan Yang}.} \bibinfo{year}{2023}\natexlab{b}.
\newblock \showarticletitle{Adatt: Adaptive task-to-task fusion network for multitask learning in recommendations}. In \bibinfo{booktitle}{\emph{Proceedings of the 29th ACM SIGKDD Conference on Knowledge Discovery and Data Mining}}. \bibinfo{pages}{4370--4379}.
\newblock


\bibitem[Li et~al\mbox{.}(2023a)]%
        {li2023removing}
\bibfield{author}{\bibinfo{person}{Haoxuan Li}, \bibinfo{person}{Kunhan Wu}, \bibinfo{person}{Chunyuan Zheng}, \bibinfo{person}{Yanghao Xiao}, \bibinfo{person}{Hao Wang}, \bibinfo{person}{Zhi Geng}, \bibinfo{person}{Fuli Feng}, \bibinfo{person}{Xiangnan He}, {and} \bibinfo{person}{Peng Wu}.} \bibinfo{year}{2023}\natexlab{a}.
\newblock \showarticletitle{Removing hidden confounding in recommendation: a unified multi-task learning approach}.
\newblock \bibinfo{journal}{\emph{Advances in Neural Information Processing Systems}}  \bibinfo{volume}{36} (\bibinfo{year}{2023}), \bibinfo{pages}{54614--54626}.
\newblock


\bibitem[Liu et~al\mbox{.}(2023)]%
        {liu2023explicitfeatureinteractionawareuplift}
\bibfield{author}{\bibinfo{person}{Dugang Liu}, \bibinfo{person}{Xing Tang}, \bibinfo{person}{Han Gao}, \bibinfo{person}{Fuyuan Lyu}, {and} \bibinfo{person}{Xiuqiang He}.} \bibinfo{year}{2023}\natexlab{}.
\newblock \bibinfo{title}{Explicit Feature Interaction-aware Uplift Network for Online Marketing}.
\newblock
\showeprint[arxiv]{2306.00315}~[cs.LG]


\bibitem[Ma et~al\mbox{.}(2018b)]%
        {ma2018modeling}
\bibfield{author}{\bibinfo{person}{Jiaqi Ma}, \bibinfo{person}{Zhe Zhao}, \bibinfo{person}{Xinyang Yi}, \bibinfo{person}{Jilin Chen}, \bibinfo{person}{Lichan Hong}, {and} \bibinfo{person}{Ed~H Chi}.} \bibinfo{year}{2018}\natexlab{b}.
\newblock \showarticletitle{Modeling task relationships in multi-task learning with multi-gate mixture-of-experts}. In \bibinfo{booktitle}{\emph{Proceedings of the 24th ACM SIGKDD international conference on knowledge discovery \& data mining}}. \bibinfo{pages}{1930--1939}.
\newblock


\bibitem[Ma et~al\mbox{.}(2018c)]%
        {MMOE2018}
\bibfield{author}{\bibinfo{person}{Jiaqi Ma}, \bibinfo{person}{Zhe Zhao}, \bibinfo{person}{Xinyang Yi}, \bibinfo{person}{Jilin Chen}, \bibinfo{person}{Lichan Hong}, {and} \bibinfo{person}{Ed~H. Chi}.} \bibinfo{year}{2018}\natexlab{c}.
\newblock \showarticletitle{Modeling Task Relationships in Multi-task Learning with Multi-gate Mixture-of-Experts}.
\newblock


\bibitem[Ma et~al\mbox{.}(2018a)]%
        {ma2018entire}
\bibfield{author}{\bibinfo{person}{Xiao Ma}, \bibinfo{person}{Liqin Zhao}, \bibinfo{person}{Guan Huang}, \bibinfo{person}{Zhi Wang}, \bibinfo{person}{Zelin Hu}, \bibinfo{person}{Xiaoqiang Zhu}, {and} \bibinfo{person}{Kun Gai}.} \bibinfo{year}{2018}\natexlab{a}.
\newblock \showarticletitle{Entire space multi-task model: An effective approach for estimating post-click conversion rate}. In \bibinfo{booktitle}{\emph{The 41st International ACM SIGIR Conference on Research \& Development in Information Retrieval}}. \bibinfo{pages}{1137--1140}.
\newblock


\bibitem[Mao et~al\mbox{.}(2023)]%
        {mao2023finalmlp}
\bibfield{author}{\bibinfo{person}{Kelong Mao}, \bibinfo{person}{Jieming Zhu}, \bibinfo{person}{Liangcai Su}, \bibinfo{person}{Guohao Cai}, \bibinfo{person}{Yuru Li}, {and} \bibinfo{person}{Zhenhua Dong}.} \bibinfo{year}{2023}\natexlab{}.
\newblock \showarticletitle{FinalMLP: an enhanced two-stream MLP model for CTR prediction}. In \bibinfo{booktitle}{\emph{Proceedings of the AAAI conference on artificial intelligence}}, Vol.~\bibinfo{volume}{37}. \bibinfo{pages}{4552--4560}.
\newblock


\bibitem[Meng et~al\mbox{.}(2024)]%
        {meng2024coarsetofinedynamicupliftmodeling}
\bibfield{author}{\bibinfo{person}{Chang Meng}, \bibinfo{person}{Chenhao Zhai}, \bibinfo{person}{Xueliang Wang}, \bibinfo{person}{Shuchang Liu}, \bibinfo{person}{Xiaoqiang Feng}, \bibinfo{person}{Lantao Hu}, \bibinfo{person}{Xiu Li}, \bibinfo{person}{Han Li}, {and} \bibinfo{person}{Kun Gai}.} \bibinfo{year}{2024}\natexlab{}.
\newblock \bibinfo{title}{Coarse-to-fine Dynamic Uplift Modeling for Real-time Video Recommendation}.
\newblock
\showeprint[arxiv]{2410.16755}~[cs.IR]


\bibitem[Nie et~al\mbox{.}(2021)]%
        {nie2021vcnet}
\bibfield{author}{\bibinfo{person}{Lizhen Nie}, \bibinfo{person}{Mao Ye}, \bibinfo{person}{Qiang Liu}, {and} \bibinfo{person}{Dan Nicolae}.} \bibinfo{year}{2021}\natexlab{}.
\newblock \showarticletitle{Vcnet and functional targeted regularization for learning causal effects of continuous treatments}.
\newblock \bibinfo{journal}{\emph{arXiv preprint arXiv:2103.07861}} (\bibinfo{year}{2021}).
\newblock


\bibitem[Rubin(2005)]%
        {Rubin01032005}
\bibfield{author}{\bibinfo{person}{Donald~B Rubin}.} \bibinfo{year}{2005}\natexlab{}.
\newblock \showarticletitle{Causal Inference Using Potential Outcomes}.
\newblock \bibinfo{journal}{\emph{J. Amer. Statist. Assoc.}} \bibinfo{volume}{100}, \bibinfo{number}{469} (\bibinfo{year}{2005}), \bibinfo{pages}{322--331}.
\newblock


\bibitem[Schwab et~al\mbox{.}(2020)]%
        {schwab2020learning}
\bibfield{author}{\bibinfo{person}{Patrick Schwab}, \bibinfo{person}{Lorenz Linhardt}, \bibinfo{person}{Stefan Bauer}, \bibinfo{person}{Joachim~M Buhmann}, {and} \bibinfo{person}{Walter Karlen}.} \bibinfo{year}{2020}\natexlab{}.
\newblock \showarticletitle{Learning counterfactual representations for estimating individual dose-response curves}. In \bibinfo{booktitle}{\emph{Proceedings of the AAAI Conference on Artificial Intelligence}}, Vol.~\bibinfo{volume}{34}. \bibinfo{pages}{5612--5619}.
\newblock


\bibitem[Shalit et~al\mbox{.}(2017)]%
        {shalit2017estimating}
\bibfield{author}{\bibinfo{person}{Uri Shalit}, \bibinfo{person}{Fredrik~D Johansson}, {and} \bibinfo{person}{David Sontag}.} \bibinfo{year}{2017}\natexlab{}.
\newblock \showarticletitle{Estimating individual treatment effect: generalization bounds and algorithms}. In \bibinfo{booktitle}{\emph{International conference on machine learning}}. PMLR, \bibinfo{pages}{3076--3085}.
\newblock


\bibitem[Su et~al\mbox{.}(2024)]%
        {su2024stem}
\bibfield{author}{\bibinfo{person}{Liangcai Su}, \bibinfo{person}{Junwei Pan}, \bibinfo{person}{Ximei Wang}, \bibinfo{person}{Xi Xiao}, \bibinfo{person}{Shijie Quan}, \bibinfo{person}{Xihua Chen}, {and} \bibinfo{person}{Jie Jiang}.} \bibinfo{year}{2024}\natexlab{}.
\newblock \showarticletitle{STEM: unleashing the power of embeddings for multi-task recommendation}. In \bibinfo{booktitle}{\emph{Proceedings of the AAAI Conference on Artificial Intelligence}}, Vol.~\bibinfo{volume}{38}. \bibinfo{pages}{9002--9010}.
\newblock


\bibitem[Tang et~al\mbox{.}(2020)]%
        {tang2020progressive}
\bibfield{author}{\bibinfo{person}{Hongyan Tang}, \bibinfo{person}{Junning Liu}, \bibinfo{person}{Ming Zhao}, {and} \bibinfo{person}{Xudong Gong}.} \bibinfo{year}{2020}\natexlab{}.
\newblock \showarticletitle{Progressive layered extraction (ple): A novel multi-task learning (mtl) model for personalized recommendations}. In \bibinfo{booktitle}{\emph{Proceedings of the 14th ACM conference on recommender systems}}. \bibinfo{pages}{269--278}.
\newblock


\bibitem[Tao et~al\mbox{.}(2023)]%
        {tao2023event}
\bibfield{author}{\bibinfo{person}{Wanjie Tao}, \bibinfo{person}{Huihui Liu}, \bibinfo{person}{Xuqi Li}, \bibinfo{person}{Qun Dai}, \bibinfo{person}{Hong Wen}, {and} \bibinfo{person}{Zulong Chen}.} \bibinfo{year}{2023}\natexlab{}.
\newblock \showarticletitle{Event-Aware Adaptive Clustering Uplift Network for Insurance Creative Ranking}. In \bibinfo{booktitle}{\emph{Proceedings of the 46th International ACM SIGIR Conference on Research and Development in Information Retrieval}}. \bibinfo{pages}{1966--1970}.
\newblock


\bibitem[Wan et~al\mbox{.}(2022)]%
        {wan2022gcfgeneralizedcausalforest}
\bibfield{author}{\bibinfo{person}{Shu Wan}, \bibinfo{person}{Chen Zheng}, \bibinfo{person}{Zhonggen Sun}, \bibinfo{person}{Mengfan Xu}, \bibinfo{person}{Xiaoqing Yang}, \bibinfo{person}{Hongtu Zhu}, {and} \bibinfo{person}{Jiecheng Guo}.} \bibinfo{year}{2022}\natexlab{}.
\newblock \bibinfo{title}{GCF: Generalized Causal Forest for Heterogeneous Treatment Effect Estimation in Online Marketplace}.
\newblock
\showeprint[arxiv]{2203.10975}~[stat.ML]
\urldef\tempurl%
\url{https://arxiv.org/abs/2203.10975}
\showURL{%
\tempurl}


\bibitem[Wang et~al\mbox{.}(2025)]%
        {wang2025inter}
\bibfield{author}{\bibinfo{person}{Fan Wang}, \bibinfo{person}{Lianyong Qi}, \bibinfo{person}{Weiming Liu}, \bibinfo{person}{Bowen Yu}, \bibinfo{person}{Jintao Chen}, {and} \bibinfo{person}{Yanwei Xu}.} \bibinfo{year}{2025}\natexlab{}.
\newblock \showarticletitle{Inter-and intra-similarity preserved counterfactual incentive effect estimation for recommendation systems}.
\newblock \bibinfo{journal}{\emph{ACM Transactions on Information Systems}} \bibinfo{volume}{43}, \bibinfo{number}{6} (\bibinfo{year}{2025}), \bibinfo{pages}{1--24}.
\newblock


\bibitem[Wang et~al\mbox{.}(2022)]%
        {wang2022enhancing}
\bibfield{author}{\bibinfo{person}{Fangye Wang}, \bibinfo{person}{Yingxu Wang}, \bibinfo{person}{Dongsheng Li}, \bibinfo{person}{Hansu Gu}, \bibinfo{person}{Tun Lu}, \bibinfo{person}{Peng Zhang}, {and} \bibinfo{person}{Ning Gu}.} \bibinfo{year}{2022}\natexlab{}.
\newblock \showarticletitle{Enhancing CTR prediction with context-aware feature representation learning}. In \bibinfo{booktitle}{\emph{Proceedings of the 45th International ACM SIGIR Conference on Research and Development in Information Retrieval}}. \bibinfo{pages}{343--352}.
\newblock


\bibitem[Wang et~al\mbox{.}(2023)]%
        {wang2023single}
\bibfield{author}{\bibinfo{person}{Yejing Wang}, \bibinfo{person}{Zhaocheng Du}, \bibinfo{person}{Xiangyu Zhao}, \bibinfo{person}{Bo Chen}, \bibinfo{person}{Huifeng Guo}, \bibinfo{person}{Ruiming Tang}, {and} \bibinfo{person}{Zhenhua Dong}.} \bibinfo{year}{2023}\natexlab{}.
\newblock \showarticletitle{Single-shot feature selection for multi-task recommendations}. In \bibinfo{booktitle}{\emph{Proceedings of the 46th International ACM SIGIR Conference on Research and Development in Information Retrieval}}. \bibinfo{pages}{341--351}.
\newblock


\bibitem[Yang et~al\mbox{.}(2021)]%
        {yang2021multi}
\bibfield{author}{\bibinfo{person}{Haizhi Yang}, \bibinfo{person}{Tengyun Wang}, \bibinfo{person}{Xiaoli Tang}, \bibinfo{person}{Qianyu Li}, \bibinfo{person}{Yueyue Shi}, \bibinfo{person}{Siyu Jiang}, \bibinfo{person}{Han Yu}, {and} \bibinfo{person}{Hengjie Song}.} \bibinfo{year}{2021}\natexlab{}.
\newblock \showarticletitle{Multi-task learning for bias-free joint ctr prediction and market price modeling in online advertising}. In \bibinfo{booktitle}{\emph{Proceedings of the 30th ACM International Conference on Information \& Knowledge Management}}. \bibinfo{pages}{2291--2300}.
\newblock


\bibitem[Zhang et~al\mbox{.}(2024)]%
        {zhang2024temporal}
\bibfield{author}{\bibinfo{person}{Xin Zhang}, \bibinfo{person}{Kai Wang}, \bibinfo{person}{Zengmao Wang}, \bibinfo{person}{Bo Du}, \bibinfo{person}{Shiwei Zhao}, \bibinfo{person}{Runze Wu}, \bibinfo{person}{Xudong Shen}, \bibinfo{person}{Tangjie Lv}, {and} \bibinfo{person}{Changjie Fan}.} \bibinfo{year}{2024}\natexlab{}.
\newblock \showarticletitle{Temporal Uplift Modeling for Online Marketing}. In \bibinfo{booktitle}{\emph{Proceedings of the 30th ACM SIGKDD Conference on Knowledge Discovery and Data Mining}}. \bibinfo{pages}{6247--6256}.
\newblock


\bibitem[Zhao et~al\mbox{.}(2019)]%
        {zhao2019recommending}
\bibfield{author}{\bibinfo{person}{Zhe Zhao}, \bibinfo{person}{Lichan Hong}, \bibinfo{person}{Li Wei}, \bibinfo{person}{Jilin Chen}, \bibinfo{person}{Aniruddh Nath}, \bibinfo{person}{Shawn Andrews}, \bibinfo{person}{Aditee Kumthekar}, \bibinfo{person}{Maheswaran Sathiamoorthy}, \bibinfo{person}{Xinyang Yi}, {and} \bibinfo{person}{Ed Chi}.} \bibinfo{year}{2019}\natexlab{}.
\newblock \showarticletitle{Recommending what video to watch next: a multitask ranking system}. In \bibinfo{booktitle}{\emph{Proceedings of the 13th ACM conference on recommender systems}}. \bibinfo{pages}{43--51}.
\newblock


\bibitem[Zhong et~al\mbox{.}(2022)]%
        {zhong2022descn}
\bibfield{author}{\bibinfo{person}{Kailiang Zhong}, \bibinfo{person}{Fengtong Xiao}, \bibinfo{person}{Yan Ren}, \bibinfo{person}{Yaorong Liang}, \bibinfo{person}{Wenqing Yao}, \bibinfo{person}{Xiaofeng Yang}, {and} \bibinfo{person}{Ling Cen}.} \bibinfo{year}{2022}\natexlab{}.
\newblock \showarticletitle{Descn: Deep entire space cross networks for individual treatment effect estimation}. In \bibinfo{booktitle}{\emph{Proceedings of the 28th ACM SIGKDD conference on knowledge discovery and data mining}}. \bibinfo{pages}{4612--4620}.
\newblock


\end{thebibliography}

%%
%% If your work has an appendix, this is the place to put it.
% \input{samples/chapters/appendix}

% \section{Research Methods}

% \subsection{Part One}

% Lorem ipsum dolor sit amet, consectetur adipiscing elit. Morbi
% malesuada, quam in pulvinar varius, metus nunc fermentum urna, id
% sollicitudin purus odio sit amet enim. Aliquam ullamcorper eu ipsum
% vel mollis. Curabitur quis dictum nisl. Phasellus vel semper risus, et
% lacinia dolor. Integer ultricies commodo sem nec semper.

% \subsection{Part Two}

% Etiam commodo feugiat nisl pulvinar pellentesque. Etiam auctor sodales
% ligula, non varius nibh pulvinar semper. Suspendisse nec lectus non
% ipsum convallis congue hendrerit vitae sapien. Donec at laoreet
% eros. Vivamus non purus placerat, scelerisque diam eu, cursus
% ante. Etiam aliquam tortor auctor efficitur mattis.

% \section{Online Resources}

% Nam id fermentum dui. Suspendisse sagittis tortor a nulla mollis, in
% pulvinar ex pretium. Sed interdum orci quis metus euismod, et sagittis
% enim maximus. Vestibulum gravida massa ut felis suscipit
% congue. Quisque mattis elit a risus ultrices commodo venenatis eget
% dui. Etiam sagittis eleifend elementum.

% Nam interdum magna at lectus dignissim, ac dignissim lorem
% rhoncus. Maecenas eu arcu ac neque placerat aliquam. Nunc pulvinar
% massa et mattis lacinia.

\end{document}